\newtheorem{theorem}{Theorem}[section]
\newtheorem{corollary}[theorem]{Corollary}
\newtheorem{lemma}[theorem]{Lemma}
\newtheorem{claim}[theorem]{Claim}
\newtheorem{proposition}[theorem]{Proposition}
\theoremstyle{remark} }
\theoremstyle{definition} \newtheorem{definition}[theorem]{Definition}}
\newenvironment{proofof}[1]{\begin{proof}[Proof of #1]}{\end{proof}}
\newenvironment{proofsketch}{\begin{proof}[Proof Sketch]}{\end{proof}}
\newcommand{\comment}[1]{}
\newcommand{\remove}[1]{}
\newcommand{\suppress}[1]{}
\newcommand{\NN}{{\mathbb{N}}}
\newcommand{\LP}{{\hbox{\sc lp}}}
\newcommand{\IP}{{\hbox{\sc ip}}}
\newcommand{\last}{{\hbox{last}}}
\DeclareMathOperator{\argmax}{argmax}
\DeclareMathOperator{\argmin}{argmin}
\newcommand{\eps}{\epsilon}
\begin{document}

\title{A Constant Factor Approximation Algorithm for
Reordering Buffer Management}

\author{Noa Avigdor-Elgrabli\thanks{Computer Science Department,
Technion---Israel Institute of Technology, Haifa 32000, Israel.
Email: {\tt noaelg@cs.technion.ac.il}.}
\and
Yuval Rabani\thanks{The Rachel and Selim Benin School of
Computer Science and Engineering and the Center of Excellence
on Algorithms, The Hebrew University of Jerusalem, Jerusalem
91904, Israel. Email: {\tt yrabani@cs.huji.ac.il}.
Research supported by ISF grants 1109-07 and 856-11 and
by BSF grant 2008059.}
}

\date{}

\setcounter{footnote}{3}
\maketitle

\begin{abstract}
In the reordering buffer management problem (RBM) a sequence
of $n$ colored items enters a buffer with limited capacity $k$.
When the buffer is full, one item is removed to the output sequence,
making room for the next input item. This step is repeated until
the input sequence is exhausted and the buffer is empty. The
objective is to find a sequence of removals that minimizes the
total number of color changes in the output sequence. The
problem formalizes numerous applications in computer and
production systems, and is known to be NP-hard.

We give the first constant factor approximation guarantee for RBM.
Our algorithm is based on an intricate ``rounding" of the solution
to an LP relaxation for RBM, so it also establishes a constant upper
bound on the integrality gap of this relaxation. Our results improve
upon the best previous bound of $O(\sqrt{\log k})$ of Adamaszek
et al. (STOC 2011) that used different methods and gave an online
algorithm. Our constant factor approximation beats the super-constant
lower bounds on the competitive ratio given by Adamaszek et al. This
is the first demonstration of an offline algorithm for RBM that is provably
better than any online algorithm.
\end{abstract}

\thispagestyle{empty}
\newpage
\setcounter{page}{1}

%%%%%%%%%%%%%%%%%%%%%%%%%%%%%%%%%%%%%%%%%%
%%%%%%%%%    Introduction
%%%%%%%%%%%%%%%%%%%%%%%%%%%%%%%%%%%%%%%%%%
\section{Introduction}

\paragraph{Problem statement and motivation.}
In the reordering buffer management problem (RBM) a sequence
of $n$ items of colors $c(1),c(2),\dots,c(n)$ (taken from a finite
set of colors $C$) enters a buffer with capacity $k\in\NN$. When
the buffer is full, one item is removed, making room for the next
input item. This step is repeated until the input sequence is
exhausted and the buffer is empty. Thus, the buffer can be used
to permute the input sequence in a limited way. In the permuted
output sequence, we are interested in the number of times there
is a color change between adjacent positions. Out of all feasible
solutions, the objective is to find a sequence of removals that
batches items of the same color and minimizes the total number
of color changes in the output sequence.

Introduced in~\cite{RSW02}, this elegant model formalizes a wide
scope of resource management problems in production engineering,
logistics, computer systems, network optimization, and information
retrieval (see, e.g.,~\cite{RSW02,BB02,KRSW04,GSV04}).
For example, one of the motivating examples of~\cite{RSW02} is
batching cars by color in the paint shop of a car manufacturing plant
to minimize the consumption of paint solvent used to wash spray guns
each time the paint color is changed between two consecutive cars.
Naturally, the buffer capacity is limited by physical space constraints,
and the incoming stream of cars is dictated by the schedule of the
assembly line. (This particular application is part of the ROADEF
Challenge 2005 of the French Operations Research Society~\cite{ROADEF2005},
see also~\cite{GSV04}.)
Generally, in computer systems and production systems buffers
are often prepended to subsystems to facilitate better control of
their input (see~\cite{SGG09,LM09}), so understanding how to optimize
buffer utilization is a fundamental and important problem.

\paragraph{Our results.}
We give the first constant factor approximation guarantee for RBM,
improving on the best previous bound of $O(\sqrt{\log k})$~\cite{ACER11}.
Our algorithm is based on ``rounding" the solution to a linear programming
relaxation that we recently proposed~\cite{AR10}. Thus, our work also
establishes an $O(1)$ upper bound on the integrality gap of this
relaxation, improving upon the best previous bound of $O(\sqrt{\log k})$
(which is not explicit in~\cite{ACER11}, but can be derived from their work).
Most previous work on RBM (including the above-mentioned~\cite{AR10,ACER11})
discussed online algorithms. There are recent lower bounds on the competitive
ratio of $\Omega(\sqrt{\log k/\log\log k})$ for deterministic algorithms, and
$\Omega(\log\log k)$ for randomized algorithms
(against the oblivious adversary)~\cite{ACER11}. Thus, our (deterministic) algorithm
shows, for the first time, that an efficient offline RBM algorithm can beat any online
algorithm. We note that in some applications, e.g. the paint shop sequencing
problem mentioned above, the setting enables an offline computing of a
good solution. Moreover, proving strong upper bounds on the integrality
gap of a natural linear programming relaxation seems to be one of the major
stumbling blocks on the way to design randomized online algorithms that
beat the deterministic lower bound.

Our algorithm for transforming a fractional solution into an integer one, without
increasing the cost by more than a constant factor, works in phases. A phase
starts at the time reached by the previous phase. A phase has a time horizon
target, which is the time that the fractional solution increases its cost by some
small constant factor. The goal of a phase is to reach the target by evicting a
constant number of colors. This goal might be impossible to achieve. In such
a case, we use an intricate charging scheme that chooses colors to evict and
charges their eviction to the past fractional cost of other colors. The main
difficulty in the analysis is to prove that the conditions under which the simple
strategy fails to reach the target imply that the charging scheme can be used
to bridge the gap. Our proofs involve illuminating observations on the structure
of fractional RBM solutions.

\paragraph{Previous work.}
As mentioned above, RBM was introduced in~\cite{RSW02}, who gave an
$O(\log^2 k)$-competitive online algorithm for the problem. The guarantee
was improved through a sequence of papers~\cite{EW05,AR10,ACER11},
culminating in the $O(\sqrt{\log k})$ bound of~\cite{ACER11}. This was
also the best known approximation guarantee for RBM prior to our work.
While the algorithms evolved gradually, each paper uses completely different
tools of analysis. The $O(\log k / \log\log k)$-competitive analysis of~\cite{AR10}
applies a dual fitting argument, using the same relaxation that we use in this paper.
The later and better result of~\cite{ACER11} does not use linear programming.
However, their proof can be modified to show that the $O(\sqrt{\log k})$ bound
on the competitive ratio also holds when competing against a fractional
adversary, and therefore the integrality gap of the~\cite{AR10} relaxation
is $O(\sqrt{\log k})$. As mentioned above, in~\cite{ACER11} lower bounds
on the competitive ratio of $\Omega(\sqrt{\log k/\log\log k})$ and
$\Omega(\log\log k)$, respectively, were established for deterministic and
randomized online algorithms, respectively.

Beyond the implications of the online setting, not much was known about the
offline case prior to our work. Recent work shows that the problem is
NP-hard~\cite{CMSS10,AKM10}. Allowing resource augmentation,~\cite{CMSS10}
give, for every $\eps > 0$, an $O(1/\eps)$-approximation algorithm for RBM
with the caveat that the approximation algorithm is allowed to use a buffer of
size $(2+\eps)\cdot k$. This strengthens a similar result implicit in~\cite{EW05},
proving this for their online algorithm, but only for $\eps = 2$. The
paper~\cite{EW05} also
shows that the optimum for a buffer of size $k$ can be at most a factor of
$O(\log k)$ larger than the optimum for a buffer of size $4k$. In~\cite{Abo08},
a matching  lower bound of $\Omega(\log k)$ was established, so the above
resource augmentation arguments cannot yield constant factor approximation
guarantees for RBM.

There are simple constant factor approximation algorithms for the complement
objective of maximizing the number of adjacent pairs with no color change in
the output sequence~\cite{KP04,BL07} (the constants are $20$ and $9$, respectively).
The minimization version that we consider
here seems more adequate for the applications in mind, and it also seems more
challenging. Clearly, if we expect successful batching into relatively long
monochromatic sequences, then guarantees on the complement maximization
objective do not guarantee good performance in terms of the minimization
objective.

For some applications, it is suitable to use more general cost functions to
measure the cost of color changes in the output sequence. In particular,
non-uniform costs, where the cost of switching to a color depends on
the color, were discussed in~\cite{EW05,AR10,ACER11}. Metric costs,
where the cost of switching between colors is determined by a metric on
the colors, were discussed for the line metric in~\cite{KP06,GS07} and for
general metrics in~\cite{ERW07}. None of these models are known to have
constant factor polynomial time approximation algorithms.

%%%%%%%%%%%%%%%%%%%%%%%%%%%%%%%%%%%%%%%%%%
%%%%%%%%%    The Algorithm
%%%%%%%%%%%%%%%%%%%%%%%%%%%%%%%%%%%%%%%%%%
\section{The Algorithm}

Consider an instance ${\cal I}$ of RBM that is given by the buffer
size $k$ and by a sequence of $n$ items of colors $c(1),c(2),\dots,c(n)$.
Let $C$ denote the set of colors that appear in the sequence.
Our algorithm solves a linear programming relaxation for ${\cal I}$,
and then uses the fractional optimal solution to
derive an integer solution whose cost is at most a constant factor
greater than the fractional solution we started with. We use a time
indexed relaxation that was first defined in our previous work~\cite{AR10},
where it was used in a dual fitting analysis of an online algorithm for the
problem. For completeness, we define the relaxation and motivate it here.

We use $0$-$1$ variables $x_{i,j}$, for $i=1,2,\dots,n$ and
$j=\max\{k+1,i\},\dots,k+n$.
An assignment $x_{i,j} = 1$ indicates that the $i$-th input item is
removed from the buffer at time $j$. The reordering buffer management
problem can be expressed as an integer linear program $\IP$ on these
variables. We require the following notation. For every input item $i$,
let $\last(i)$ denote the last input item of color $c(i)$, and for $i\ne\last(i)$
let $n(i)$ denote the next input item of color $c(i)$. For notational
convenience, we put $n(i) = k + n + 2$ for all $i=\last(i)$.
Then, $\IP$ is:
\begin{eqnarray}
\nonumber
\hbox{minimize}  & \sum_{i=1}^n\sum_{j=\max\{i,k+1\}}^{n(i)-2} x_{i,j} & \\
\label{cons: input}
\hbox{s.t.} &\displaystyle{\sum_{j=\max\{i,k+1\}}^{k+N}x_{i,j} = 1} & \forall i\\
\label{cons: output}
                 &\displaystyle{\sum_{i=1}^{j}x_{i,j} = 1}          & \forall j\\
\label{cons: order}
                 &x_{n(i),j}-x_{i,j-1}\geq 0                                 & \forall i\neq \last(i);\ \forall j\geq n(i)\\
\label{cons: int}
                 & x_{i,j}\in \{0,1\}                                            & \forall i;  \forall j\geq i.
\end{eqnarray}
The constraints~\eqref{cons: input} guarantee that each item is eventually
removed from the buffer. The constraints~\eqref{cons: output} guarantee
that at each time slot one item is removed from the buffer.
The constraints~\eqref{cons: order} prevent the solution from switching
colors while there are still items of the current color in the buffer. These
constraints are needed to guarantee that the linear objective function
measures the cost of the solution correctly. Notice that the objective function
simply counts the number of items that are removed from the buffer before
the next item of the same color is encountered in the input. Without
constraints~\eqref{cons: order}, we could avoid paying for color changes
by keeping in the buffer just the last encountered item of a color $c$ until
the next item of this color is reached. We denote the optimal value of
$\IP$ by $z_{\IP}$.
\begin{proposition}\label{pr: IP = RBM}
The value of an optimal solution for ${\cal I}$ is exactly $z_{\IP}$.
\end{proposition}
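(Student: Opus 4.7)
The plan is to establish $\OPT=z_\IP$ by exhibiting a value-preserving correspondence between integer feasible solutions of $\IP$ and valid RBM removal schedules. In one direction, given an $\IP$-feasible $x$, I would read off the schedule $\sigma(j)=i$ whenever $x_{i,j}=1$. Constraints~\eqref{cons: input} and~\eqref{cons: output}, together with the index range $j\ge\max\{k+1,i\}$, ensure that $\sigma$ is a valid bijection from the time slots $\{k+1,\dots,k+n\}$ to the items $\{1,\dots,n\}$ in which every removed item has already entered the buffer; a short counting argument shows that the buffer occupancy equals exactly $k$ after each removal at time $j\ge k+1$, so $\sigma$ is a feasible RBM schedule.

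For the reverse direction, starting from an optimal RBM schedule $\sigma$, I would set $x_{i,j}=1$ iff $\sigma(j)=i$. Constraints~\eqref{cons: input} and~\eqref{cons: output} are immediate, but~\eqref{cons: order} can fail, so I would first modify $\sigma$ at no cost increase into a non-preemptive schedule satisfying: whenever an item $i$ is removed at time $j_i$ and $n(i)$ has already entered the buffer (i.e., $n(i)\le j_i+1$), item $n(i)$ is removed at time $j_i+1$. The modification is local: at a violation witnessed by $i$ at $j_i$ with $n(i)$ scheduled at some later time $j''>j_i+1$, I swap $n(i)$ into slot $j_i+1$ and shift the items originally in slots $j_i+1,\dots,j''-1$ one slot later. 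Each shifted item is only postponed and all slots outside $[j_i,j'']$ are untouched, so the modified schedule is RBM-feasible; iterating clears every violation, and the resulting $x$ becomes $\IP$-feasible.

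Finally, I would verify that for any $\IP$-feasible $x$ the objective $\sum_i\sum_{j=\max\{i,k+1\}}^{n(i)-2} x_{i,j}$ equals the cost of the corresponding $\sigma$. Constraint~\eqref{cons: order} forces the items of each color to be removed in input order, and two consecutive same-color items $i,n(i)$ (with $i\neq\last(i)$) occupy consecutive output slots precisely when $j_i\ge n(i)-1$. Thus an item $i\neq\last(i)$ contributes $1$ to the objective iff it is the last item of its monochromatic block in $\sigma$, while every item with $i=\last(i)$ contributes $1$ unconditionally; summing, the objective counts exactly the monochromatic blocks of $\sigma$, which is the RBM cost.

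The main obstacle I expect is the cost-preservation step inside the non-preemption modification. The sequence of colors in the affected interval rotates from $c(i),c(a_1),\dots,c(a_{j''-j_i-1}),c(i)$ in the original schedule to $c(i),c(i),c(a_1),\dots,c(a_{j''-j_i-1})$ in the modified one, so after cancelling the shared internal transitions the net change collapses to the three boundary transitions at positions $j_i\to j_i+1$, $j''-1\to j''$, and $j''\to j''+1$. A direct triangle-inequality argument on the equality indicator of colors shows the net change is nonpositive, which is the only genuinely delicate accounting step in the proof.
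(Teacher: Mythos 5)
Your proof is correct and fleshes out the paper's one-line ``obvious correspondence'' sketch along the same lines the authors had in mind. The one genuinely non-obvious point you identify and handle---that an arbitrary optimal RBM schedule need not satisfy constraints~\eqref{cons: order} and must first be normalized by the swap-and-shift exchange argument (with the discrete-metric triangle inequality showing the cost does not increase)---is precisely the detail the paper leaves implicit; to make the iteration manifestly terminating you would want to always fix the leftmost violation, so that the index of the earliest violated slot strictly increases at each step.
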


\begin{proofsketch}
The obvious correspondence between RBM output sequences and feasible
$\IP$ solutions matches RBM solutions and $\IP$ solutions with the same
cost.
\end{proofsketch}

A linear programming relaxation $\LP$ is derived by relaxing the
constraints~\eqref{cons: int} to
\begin{eqnarray}
\label{cons: frac}
                & x_{i,j}\ge 0                                                     & \forall i;  \forall j\geq i.
\end{eqnarray}
We denote the value of the relaxation at a feasible solution $x$ by $z(x)$,
and the optimal value by $z_{\LP}$.
Given a feasible fractional solution $x$ of $\LP$ (i.e., $x$ satisfying
constraints~\eqref{cons: input}, \eqref{cons: output}, \eqref{cons: order},
and \eqref{cons: frac}) and a time step $j$, we can think of $x$ as defining
a fractional packing of input items into the buffer at time $j$. I.e., every
input item $i\le j$ is in the buffer with weight $w_i^j = w(x)_i^j$
where $w_i^j = 1 - \sum_{j'=i}^j x_{i,j'}$. For notational convenience, we
define $w_i^{i-1} = 1$. Notice that $w$ is a function of $x$; we usually
omit $x$ from the notation. Also notice that due to
constraints~\eqref{cons: input}, $w_i^j\in [0,1]$.
\begin{proposition}\label{pr: capacity constraint}
If $j\le n$, then $\sum_{i\le j} w_i^j = k$,
and if $j > n$ then $\sum_{i\le j} w_i^j = k + n - j$.
\end{proposition}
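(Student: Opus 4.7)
The proof is a direct algebraic identity that reduces to the output constraints~\eqref{cons: output}. The high-level idea is that $\sum_{i \le j} w_i^j$ accounts for the total mass of items currently in the buffer at time $j$, and by~\eqref{cons: output} exactly one unit of mass leaves per time step starting at time $k+1$; the identity merely bookkeeps arrivals and departures up to time $j$.

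Concretely, I would first expand the sum by substituting the definition $w_i^j = 1 - \sum_{j'=i}^j x_{i,j'}$. The ``$1$''-contributions sum to the number of items that have arrived by time $j$, namely $j$ if $j \le n$ and $n$ if $j > n$ (items $i > n$ do not exist). The subtracted double sum is
$$\sum_{i} \sum_{j' = i}^{j} x_{i,j'}.$$
Swapping the order of summation gives an outer sum over $j'$ with inner sum $\sum_{i} x_{i,j'}$. For $j' \le k$ this inner sum is empty, since $x_{i,j'}$ is defined only when $j' \ge \max\{k+1, i\}$; for $k+1 \le j' \le j$ it equals $1$ by~\eqref{cons: output}. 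Hence, assuming $j \ge k$ (implicit in the proposition, since for $j < k$ the buffer cannot yet hold $k$ items), the double sum evaluates to $j - k$.

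Substituting back, the two cases $j \le n$ and $j > n$ yield $j - (j - k) = k$ and $n - (j - k) = k + n - j$, respectively, as claimed.

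The only ``obstacle'' is minor bookkeeping of the domains of $x_{i,j'}$ (namely $j' \ge \max\{k+1, i\}$ and $i \le n$), which is needed both to expand $w_i^j$ correctly and to identify the swapped inner sum with~\eqref{cons: output}. Once that is handled, the proposition is essentially a restatement of the fact that exactly one item leaves the buffer per time step once the buffer fills, which is what~\eqref{cons: output} encodes.
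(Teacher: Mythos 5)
Your proof is correct and follows essentially the same approach as the paper: expand $w_i^j$ via its definition, swap the order of summation, and apply the output constraints~\eqref{cons: output} to evaluate the double sum as $j-k$. You add a bit more explicit bookkeeping of the variable domains and the $j>n$ case (which the paper leaves as "a similar argument"), but the underlying argument is identical.
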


\begin{proofsketch}
By constraints~\eqref{cons: output}, for every $k+1\le j\le k+n$,
it holds that $\sum_{i\le j} x_{i,j} = 1$. Therefore, if $j\le n$,
$$
\sum_{i\le j} w_i^j  =  \sum_{i\le j} \left(1 - \sum_{j'=i}^j x_{i,j'}\right)
  =   j - \sum_{j'=k+1}^j \sum_{i\le j'} x_{i,j'}
  =  j - (j - k) = k.
$$
A similar argument shows the case of $j > n$.
\end{proofsketch}

\begin{proposition}\label{pr: monotonicity of w}
For every $i\ne\last(i)$, for every $j\ge n(i) - 1$, $w^j_i\le w^j_{n(i)}$.
\end{proposition}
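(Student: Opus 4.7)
The plan is to reformulate the claim in terms of the cumulative removals $X_\ell(j) := \sum_{j'=\ell}^j x_{\ell, j'}$. Since $w_\ell^j = 1 - X_\ell(j)$ by definition, the target inequality $w_i^j \leq w_{n(i)}^j$ is equivalent to $X_i(j) \geq X_{n(i)}(j)$, or equivalently, via the input-mass identity~\eqref{cons: input}, to the tail comparison $\sum_{j'>j} x_{i,j'} \leq \sum_{j'>j} x_{n(i),j'}$.

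The base case $j = n(i) - 1$ is immediate: the convention $w_{n(i)}^{n(i)-1} = 1$ combined with $w_i^{n(i)-1} \leq 1$ (which follows from~\eqref{cons: input} and~\eqref{cons: frac}) gives the inequality on the spot.

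For $j \geq n(i)$, I would apply constraint~\eqref{cons: order} in its shifted form $x_{i, j'-1} \leq x_{n(i), j'}$ and telescope it over $j' \in [n(i), j]$ to obtain
\begin{equation*}
X_{n(i)}(j) \;\geq\; \sum_{j''=n(i)-1}^{j-1} x_{i, j''} \;=\; X_i(j-1) - X_i(n(i)-2).
\end{equation*}
Telescoping the same constraint over the complementary range $j' \in [j+1, k+n]$ instead gives a direct comparison between the ``after $j$'' tails $\sum_{j'>j} x_{i,j'}$ and $\sum_{j'>j} x_{n(i),j'}$, which is exactly the reformulated claim. Combining both telescopings with the total-mass identity $X_i(k+n) = X_{n(i)}(k+n) = 1$ from~\eqref{cons: input} should close the argument.

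The step I expect to be the main obstacle is managing the terminal time slot $j' = k+n$, for which the shifted constraint has no counterpart (since there is no variable $x_{n(i), k+n+1}$). Controlling this boundary --- ensuring that the mass $x_{i, k+n}$ does not spoil the tail comparison --- requires additional input: the output constraint~\eqref{cons: output} at time $k+n$, together with the fact that $i \neq \last(i)$ and with the ``head-start'' mass $X_i(n(i)-2)$ of item $i$ already removed before $n(i)$ arrived, should be combined to absorb this terminal term into the inequality.
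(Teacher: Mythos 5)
Your approach is fundamentally the same as the paper's: rewrite $w^j_i \leq w^j_{n(i)}$ as a comparison between the tail sums $\sum_{j'>j} x_{i,j'}$ and $\sum_{j'>j} x_{n(i),j'}$, and then sum (``telescope'') constraint~\eqref{cons: order} in the form $x_{i,j'-1} \leq x_{n(i),j'}$ over the relevant range of $j'$. The paper presents this more compactly as a proof by contradiction (if $w^j_i > w^j_{n(i)}$, then $\sum_{j'>j} x_{i,j'} > w^{j+1}_{n(i)} = \sum_{j'>j+1} x_{n(i),j'}$, which after index-shifting forces some $j' > j+1$ with $x_{n(i),j'} - x_{i,j'-1} < 0$), but the arithmetic content is identical to your second telescoping. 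Your first telescoping (over $[n(i),j]$) produces a lower bound on $X_{n(i)}(j)$, which is the wrong direction, and does not combine usefully with the second one; it is a red herring.

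You correctly flag the real gap --- the terminal slot $j'=k+n$ --- but the ingredients you propose to close it (constraint~\eqref{cons: output} at time $k+n$, the ``head-start'' mass $X_i(n(i)-2)$) will not work, and cannot work. The missing piece is much simpler and lives entirely in constraint~\eqref{cons: order} itself: it is stated for \emph{all} $j \geq n(i)$, including $j = k+n+1$, and since $x_{n(i),k+n+1}$ is not a declared variable it is taken to be $0$. This yields $x_{i,k+n} \leq 0$, hence $x_{i,k+n} = 0$ for every $i \neq \last(i)$. With that, the second telescoping over $j' \in [j+2, k+n+1]$ covers the full tail $\sum_{j''=j+1}^{k+n} x_{i,j''}$ on the small side and produces $\sum_{j''=j+2}^{k+n} x_{n(i),j''} \leq \sum_{j''=j+1}^{k+n} x_{n(i),j''}$ on the large side, and the claim follows. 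The reason your alternative ingredients cannot substitute is that the proposition is \emph{false} if one drops the $j=k+n+1$ instance of~\eqref{cons: order}: for example with $k=1$, $n=2$, two items of the same color, the assignment $x_{1,2}=0$, $x_{1,3}=1$, $x_{2,2}=1$, $x_{2,3}=0$ satisfies~\eqref{cons: input}, \eqref{cons: output}, \eqref{cons: frac}, and all instances of~\eqref{cons: order} with $j \leq k+n$, yet $w^2_1 = 1 > 0 = w^2_2$. So the output constraint and the head-start mass, both of which hold in this example, cannot possibly yield the conclusion.
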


\begin{proofsketch}
If at some point $j\ge n(i) - 1$ we have $w^j_i > w^j_{n(i)}$, then
$\sum_{j' > j} x_{i,j'} = w^j_i > w^j_{n(i)} \ge w^{j+1}_{n(i)} =
\sum_{j' > j+1} x_{n(i),j'}$, so at some point $j' > j+1$,
$x_{n(i),j'}-x_{i,j'-1} < 0$, violating constraints~\eqref{cons: order}.
\end{proofsketch}

Our main result is the following theorem.
\begin{theorem}\label{thm: main}
There is a constant $\alpha > 1$ and a (deterministic) polynomial time
algorithm which given a feasible fractional solution $x$ of $\LP$ computes
a feasible $0$-$1$ solution $\bar{x}$ of $\IP$ such that
$z(\bar{x})\le\alpha\cdot z(x)$.
\end{theorem}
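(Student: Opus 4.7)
The plan is to design a phase-based rounding algorithm, along the lines sketched in the introduction. I partition the time horizon into phases $P_1,P_2,\dots$, with $P_t=[s_t,e_t]$ defined so that the fractional cost contributed by the slots in $P_t$ is approximately a small absolute constant $\beta$ chosen later. Since the total fractional cost decomposes across phases, a bound of $O(1)$ integer color changes per phase immediately yields $z(\bar x)\le\alpha\cdot z(x)$ for a universal $\alpha$.

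Within each phase, my integer algorithm commits to evicting (outputting as a contiguous monochromatic block) a constant number of colors, chosen so that the resulting integer buffer state at $e_t$ is consistent with continuing into $P_{t+1}$. The natural candidates for eviction are colors whose fractional weight mass in the buffer---namely $\sum_{i\le s_t,\,c(i)=c}w_i^{s_t}$---is small, or becomes small during $P_t$. By Proposition \ref{pr: monotonicity of w}, within each color the weights $w_i^j$ are nondecreasing in the item order, so ``mostly fractionally removed'' corresponds to a well-defined prefix of items of that color; evicting such a prefix integrally is cheap to charge, because the LP has essentially already paid for it.

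The central structural claim I will need is that at the start of each phase there exist $O(1)$ colors whose joint eviction (i) frees enough buffer space to let the integer algorithm absorb all of the input arriving during $P_t$ without exceeding capacity, and (ii) admits a charging scheme attributing each integer color change to $\Omega(1)$ units of fractional cost, either inside $P_t$ itself or in earlier phases. In the easy regime---when a constant number of colors are nearly fractionally removed at $s_t$---both conditions hold directly by choosing those colors and charging the $O(1)$ integer changes against the $\beta$ fractional cost of the phase.

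The main obstacle, and the source of the intricate charging alluded to in the introduction, is the opposite regime: when no $O(1)$-sized set of colors carries enough near-removed fractional weight to free sufficient buffer space. I expect to argue that this can only occur when many colors simultaneously carry moderate residual weight at $s_t$. Proposition \ref{pr: capacity constraint} pins $\sum_i w_i^{s_t}$ down exactly, which via pigeonholing limits the number of such moderate-weight colors and, more importantly, implies that the LP must have already performed significant fractional work in earlier phases to reach such a state. The heart of the proof will be to use Proposition \ref{pr: monotonicity of w} to identify, for each color the algorithm is forced to evict, a distinct prefix of its items on which to book the extra integer color change, and to show that across all phases this charging is injective and of bounded density, yielding $z(\bar x)\le\alpha\cdot z(x)$ for a universal constant $\alpha$.
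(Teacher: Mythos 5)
Your proposal reproduces the paper's high-level scaffolding---fractional-cost phases, $O(1)$ steps per phase charged either to the phase's own increment or to earlier fractional work---but it is missing the central technical mechanism, and the one concrete idea you offer for the ``hard regime'' would not work.

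The key quantity the paper's proof revolves around is the discrepancy $\Delta_j = \sum_{i:\ \bar w^j_i = 0} w^j_i$ between the integer buffer and the fractional buffer: the total fractional mass still present in the LP buffer of items the integer algorithm has already evicted. Your proposal never introduces anything like this. You instead propose to use Proposition~\ref{pr: capacity constraint} and pigeonholing on $\sum_i w_i^{s_t}$, but that sum equals $k$ at every time step regardless of what either the LP or the integer algorithm has done, so it cannot detect whether ``the LP must have already performed significant fractional work.'' Many colors carrying moderate residual weight is simply the generic state of a full buffer; it carries no charging signal by itself. What does carry signal is $\Delta_j$, because it bounds $\sum_{(I,j')} \lambda_{I,j'}(t_{I,j'}-j)$ over the MSM packing of Proposition~\ref{pr: packing sequences}, which in turn (Claim~\ref{cl: rule 4}) shows that when $\Delta_j$ is small, a single eviction of the dominant block inflates $\Delta$ enough that the case-\ref{it: rule 3} charging scheme becomes available on the next iteration. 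Without a quantity that compares the integer and fractional trajectories, you have no way to certify progress in the regime where no small set of colors is nearly removed.

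Two further gaps. First, your ``easy regime'' charging is misassigned: when a color is nearly fractionally removed you should charge its eviction to the \emph{past} fractional cost that drove its weight down (this is exactly case~\ref{it: rule 0}, bounded by $z(x)/\delta_1$), not to the current phase's $\beta$ budget---otherwise an adversary can force unboundedly many such colors to become ripe within a single phase and the $\beta$ budget is overwhelmed. Second, the charging scheme you gesture at (``identify a distinct prefix of its items on which to book the extra integer color change'') needs an argument that the same fractional work is not charged repeatedly; the paper's Lemma~\ref{lm: charge} carefully shows each MSM can contribute to a charge at most twice, and that argument relies on the phase targets $t_q$ and on the case-\ref{it: rule 1} test (an MSM straddling $t_q$ would let the algorithm reach the target directly). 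Your outline provides no analogous double-counting control, so even if you identified the right colors to evict, the total charge could grow with the number of phases rather than with $z(x)$.
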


\begin{corollary}\label{cor: main}
There is a (deterministic) polynomial time $\alpha$-approximation
algorithm for reordering buffer management.
\end{corollary}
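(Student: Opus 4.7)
The plan is to partition the time horizon into phases driven by fractional cost accumulation, and in each phase to evict a constant number of colors from the integer buffer so as to stay aligned with the fractional $\LP$ solution. Fix a small constant $\Theta > 0$. Define a phase $P_\ell = (t_{\ell-1}, t_\ell]$ by letting $t_\ell$ be the smallest time after $t_{\ell-1}$ at which the contribution of $x$ to the objective over $(t_{\ell-1}, t_\ell]$ reaches $\Theta$. This yields $m = O(z(x)/\Theta) + 1$ phases, each costing the fractional solution at least $\Theta$. The integer algorithm $\bar{x}$ would advance from time $t_{\ell-1}$ to time $t_\ell$ by fully flushing some set $S_\ell$ of colors from its buffer and then draining other items in a canonical order induced by $x$; its per-phase cost is essentially $|S_\ell|$.

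To pick $S_\ell$, let $W_c^j = \sum_{i:\,c(i)=c} w_i^j$ denote the total fractional weight of color $c$ at time $j$. A color $c$ is a \emph{cheap} eviction candidate at time $t_\ell$ if $W_c^{t_\ell}$ is small, so that $x$ has nearly finished processing $c$ and only a small discrepancy is introduced by evicting $c$ outright from the integer buffer. The simple strategy sets $S_\ell$ to be the colors that became cheap during $P_\ell$, provided there are at most some constant $\beta$ of them. By Proposition \ref{pr: capacity constraint}, the total weight in the buffer is always exactly $k$ (or $k+n-j$ near the end), so drops in the $W_c^j$'s during a phase are tightly linked to the fractional cost accumulated in the phase; if only $\beta$ colors become cheap in $P_\ell$, the phase contributes at most $\beta$ to the integer cost, which sums to $O(z(x))$ across all phases.

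The hard case is when many colors cross the cheapness threshold during $P_\ell$, so the natural $S_\ell$ exceeds $\beta$. Here I would invoke a charging scheme: for each extra color included in $S_\ell$ beyond the budget, locate a distinct block of uncharged fractional cost of size $\Theta'$ in preceding phases, paid by items of other colors. Proposition \ref{pr: monotonicity of w}, which forces later items of the same color to retain at least as much residual weight as earlier ones, makes the ``completion'' of any color a well-ordered event; this lets me associate each excess eviction with a disjoint window of past fractional cost. An invariant maintained inductively across phases would guarantee that the pool of uncharged past cost is never exhausted.

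The main obstacle is precisely the hard case: showing that whenever many colors simultaneously become cheap in a phase, there is a structurally forced supply of uncharged past fractional cost. Propositions \ref{pr: capacity constraint} and \ref{pr: monotonicity of w} together should force $x$ to have paid substantial cost on those colors in earlier phases, and the monotonicity makes that cost locatable by the charging procedure. Combining the two cases, each color change in $\bar{x}$ is paid for either by the $\Theta$-worth of fractional cost in its own phase or by a charge to a distinct $\Theta'$-block of prior fractional cost, yielding $z(\bar{x}) \le \alpha \cdot z(x)$ for a constant $\alpha$ depending on $\Theta$, $\Theta'$, and $\beta$.
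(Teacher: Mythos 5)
You are proving the wrong statement. In the paper, Corollary~\ref{cor: main} is an immediate consequence of Theorem~\ref{thm: main}, Proposition~\ref{pr: IP = RBM}, and polynomial-time LP solvability: compute an optimal LP solution $x^*$ with $z(x^*)=z_{\LP}\le z_{\IP}$, round it via Theorem~\ref{thm: main} to obtain $\bar{x}^*$ with $z(\bar{x}^*)\le\alpha z_{\LP}\le\alpha z_{\IP}$, and conclude via Proposition~\ref{pr: IP = RBM}. Your proposal never invokes Theorem~\ref{thm: main}; it attempts instead to re-derive its proof from scratch, which conflates the corollary with the paper's main technical result.

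Viewed as a sketch of Theorem~\ref{thm: main}, the proposal is incomplete exactly where the paper's novelty lies. Your framework --- phases cut by accumulation of a fixed amount of fractional cost, a constant eviction budget per phase, and a charging scheme for the overflow --- mirrors the paper's, but you explicitly defer the hard case (``I would invoke a charging scheme,'' ``should force $x$ to have paid substantial cost''). The paper resolves that case with a concrete mechanism you would need to supply: (i) the observation that if evicting a constant number of blocks cannot reach the target $t_q$, the buffer discrepancy $\Delta_j$ must be at least $\frac{1}{\gamma}(t^j_q-j)$, which is exactly what makes the charging scheme of case~\ref{it: rule 3} feasible (Claims~\ref{cl: rule3_a} and~\ref{cl: rule3_b}); (ii) the careful block-index construction $(r_p, r'_p, s_p)$ ensuring each eviction charges at most $\delta_2|B|$ of other blocks while each charged group pays at least $\delta_1$, with a geometric-decay argument to handle intermediate uncharged blocks; (iii) the golden-ratio/Markov argument of Claim~\ref{cl: rule 4} showing the fallback case~\ref{it: rule 4} cannot recur within a phase; and (iv) Lemma~\ref{lm: charge}, bounding the total charge by $2z(x)$ via the MSM decomposition of Proposition~\ref{pr: packing sequences}. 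Propositions~\ref{pr: capacity constraint} and~\ref{pr: monotonicity of w}, on which you lean, are nowhere near sufficient on their own; the MSM decomposition and the $\Delta_j$ threshold are essential and absent from your sketch.
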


\begin{proof}
Compute an optimal solution $x^*$ of $\LP$ with cost
$z(x^*) = z_{\LP}\le z_{\IP}$. Use Theorem~\ref{thm: main}
to compute a $0$-$1$ solution $\bar{x}^*$ with cost
$z(\bar{x}^*)\le\alpha z_{\LP}\le \alpha z_{\IP}$.
The corollary follows from Proposition~\ref{pr: IP = RBM}.
\end{proof}

We now describe the rounding algorithm of Theorem~\ref{thm: main}.
The algorithm works in phases. Each phase evicts one or more colors
from the buffer. To evict a color, the algorithm removes the items of
this color from the buffer until the buffer contains no such item. We
refer to the eviction of one color from the buffer as a step. The algorithm
uses constants $\delta_1,\delta_2,\delta_3\in (0,1)$, and
$\gamma=\gamma(\delta_1,\delta_2) > 1$, to be defined
later. In order to describe the algorithm, we need the following definition:
\begin{definition}
For $q=1,2,\dots,\lfloor z(x) / \delta_3 \rfloor$,
$$
t_q = \min\left\{t:\ \sum_{j=k+1}^t \sum_{i\le j} y_{i,j} \ge
    q\cdot\delta_3\right\},
$$
where $y_{i,j} = x_{i,j}$ if $n(i) > j + 1$, and $y_{i,j}=0$ otherwise. 
\end{definition}
In other words, $t_q$ denotes the earliest time at which the cost of
the fractional solution $x$ increases to at least $q\cdot\delta_3$.
The goal of phase $q$ is to reach $t_q$. (If the last $t_q$ was
already reached, the goal of the last phase is to reach the end
of the output sequence.) Each phase includes
three types of steps. There are steps that are charged against the
past $x$-cost of the items removed by that step. There are steps
that are paid for by charging the past $x$-cost of other items in
the buffer. Finally, there are a constant number of steps that cannot
be charged to the past $x$-cost, so they are charged to the increase
in $x$-cost that sets $t_q$.

For the second type of steps, we will use a charging scheme to determine
the choice of colors to evict, and also to analyze the algorithm. For each item
$i$ in the buffer at time $j$ we
maintain an index $\tau_{i,j}$ which is the last time before $j$ that $i$ was
charged. Initially, $\tau_{i,i} = i-1$. At time $j$, if $i$ was not charged at
time $j-1$, we put $\tau_{i,j} = \tau_{i,j-1}$. Otherwise, we put $\tau_{i,j} = j-1$.
(The charge is implied by the sequence of charging times.)
For each time step $j$, for each item $i$ in the buffer of the algorithm at time
$j$, define $d^j_i = w_i^{\tau_{i,j}} - w_i^j$. This is the fraction of $i$ that the
solution $x$ removed from the buffer since the last time $i$ was charged.

We call the set of all items of a specific color in the algorithm's
buffer a {\em block}. Let ${\cal B}^j = \{B^j_1,B^j_2,\dots,B^j_{m_j}\}$
denote the set of blocks in the algorithm's buffer at time $j$ (before
removing from the buffer an item at time $j$). For $r=1,2,\dots,m_j$,
let $f^j_r$ denote the earliest item in $B^j_r$.
For an item $i$, we denote $t(i) = \min\{t: w_i^t\le 1 - \delta_1\}$.
We assume that the
blocks are ordered so that $t(f^j_1)\le t(f^j_2)\le \cdots\le t(f^j_{m_j})$.
We denote by $\Delta_j$ the difference in volume between
the algorithm's buffer and the fractional buffer at time $j$.
Formally,
$$
\Delta_j = \frac 1 2 \|\bar{w}^j - w^j\|_1 =
\sum_{r=1}^{m_j}\sum_{i\in B^j_r} (1 - w^j_i) =
\sum_{i \le j:\ \bar{w}^j_i = 0} w^j_i,
$$
where $\bar{w} = w(\bar{x})$.
For a current time step $j$ and a target time step $t_q$, let
$$
I^j_q = \argmax\{|I|:\ I\subset [j,t_q]\wedge \forall i,i'\in I, c(i) = c(i')\},
$$
and put
$$
t^j_q = \max\{j, t_q + 1 - |I^j_q|\}.
$$
This is the earliest time where items from one color that arrive after
time $j-1$ can be removed from the buffer consecutively,
reaching time $t_q$ or later. (Notice that if $t_q < j$, then $t^j_q = j$.)
Intuitively, if we reach $t^j_q$ without using items from $I^j_q$,
then in one more step we can reach our target $t_q$.
Consider a decision time $j$ (the previous phase ended at time $j-1$).
We execute the following procedure:
\begin{enumerate}
\setcounter{enumi}{-1}
\item\label{it: rule 0} While our buffer contains an item $i$ such that
         $t(i)\le j$, we evict color $c(i)$, and we increment $j$
         to be the first time step following the one we've reached. When
         there are no more steps of this case, we execute the first case
         among~\ref{it: rule 1}--\ref{it: rule 4} that applies.
\item\label{it: rule 1} If there is a color $c$ that we can evict and
         reach $t_q$, we evict one such color, thus ending the phase.
\item\label{it: rule 2} If our buffer contains $t^j_q - j$ items from one
         or two colors, we evict those colors in two steps, and if we haven't
         reached $t_q$, we also evict the color of $I^j_q$. We
         will prove in Claim~\ref{cl: rule 2} that we reach $t_q$, so the phase
         ends.
\item\label{it: rule 3} If $\Delta_j\ge\frac{1}{\gamma}\cdot (t^j_q - j)$,
         then we do the following. For $B\in {\cal B}^j$, put
         $\hat{d}^j_B = \frac{1}{|B|}\cdot\sum_{i\in B} d^j_i$. Define
         $s_1 > s_2 > \cdots > s_{p_j}$ inductively as follows. Initially
         set $s_1 = m_j$. Assuming $s_p$ is defined, let $r_p\le s_p$ be the
         largest index for which
         $\sum_{u=r_p}^{s_p-1} \sum_{i\in B^j_u} d^j_i\le \delta_2 |B^j_{s_p}|$.
         If $\sum_{u=r_p}^{s_p} \hat{d}^j_{B^j_u} \ge \delta_1$,
         let $r'_p\in [r_p,s_p]$ be the largest index for which
         $\sum_{u=r'_p}^{s_p} \hat{d}^j_{B^j_u} \ge \delta_1$.
         (See Figure~\ref{fig.rule3} in the appendix.)
         We evict the color of block $B^j_{s_p}$, and charge the items in
         $B^j_{r'_p},\dots,B^j_{s_p-1}$
         (i.e., for each charged item $i$, set $\tau_{i,j+1} = j$). If $r'_p > 1$,
         set $s_{p+1}$ to be $r'_p-1$, else set $p_j = p$.
         Otherwise, if $\sum_{u=r_p}^{s_p} \hat{d}^j_{B^j_u} < \delta_1$
         and $r_p > 1$, set $s_{p+1} = \argmax\{|B^j_u|:\ u\in [r_p-1,s_p-1]\}$. 
         Otherwise, if $r_p = 1$, set $p_j = p$.
         If the entire process removes fewer than $t^j_q - j$ 
         items that were in our buffer at time $j$,
         we evict the color of the largest block $B\in {\cal B}^j$
         that remains. We will prove in Claims~\ref{cl: rule3_a} and~\ref{cl: rule3_b}
         that at this point we can evict the color of $I^j_q$ and reach $t_q$, thus
         ending the phase.
\item\label{it: rule 4} In the remaining case,
%In the remaining case, we will prove that there
%exists $B\in {\cal B}^j$ of size $|B|\ge (1 - ???)\cdot (t^j_q - j)$
%such that $\sum_{i\in B}\sum_{j' = j}^{t^j_q-1} x_{i,j}\ge
%\frac{1}{\gamma} ???\cdot (t' - t)$. (maybe the largest block is enough)
         we evict the color of the largest block $B\in {\cal B}^j$.
         We increment $j$
         to be the time step following the last output step. Now, we
         execute the procedure again. We will prove in Claim~\ref{cl: rule 4}
         that in a phase we never reach case~\ref{it: rule 4} twice.
\end{enumerate}
This completes the definition of $\bar{x}$.

%%%%%%%%%%%%%%%%%%%%%%%%%%%%%%%%%%%%%%%%%%
%%%%%%%%%    Analysis
%%%%%%%%%%%%%%%%%%%%%%%%%%%%%%%%%%%%%%%%%%
\section{Analysis}

In this section we prove our main result, Theorem~\ref{thm: main}.
We first give an interpretation of the feasible fractional solution $x$.
Consider a color $c$, a sequence $I$ of color $c$ items $i_1,i_2,\dots,i_m$
and a starting time $j$. Let $M_{I,j}$ denote the matching given by
$M_{I,j}(i_s) = j+s$, for all $s=1,2,\ldots,m$.
We say that $M_{I,j}$ is a {\em monochromatic sequence matching} (MSM)
iff the items are a maximal sequence of consecutive items of the same color $c(i_1)$.
%and there is a RBM solution that removes these items in steps $j+1,\dots,j+m$,
%then switches to another color.
In other words, $M_{I,j}$ is an MSM iff it
satisfies the following conditions:
($i$) for every $s=1,2,\ldots,m-1$ it holds that $c(i_s)=c(i_{s+1})$
and $n(i_{s})=i_{s+1}$;
%($ii$) $|\{s: i_s\le j\}|\le k$;
($ii$) $j+s\geq i_s$ for every $s=1,2,\ldots,m$;
($iii$) $j+m< n(i_m)-1$.
\begin{proposition}\label{pr: packing sequences}
For every feasible fractional solution $x$ there is a fractional packing
of monochromatic sequence matchings $\lambda = \lambda(x)$
that satisfies the following constraints:
(a) for each input item $i$, $\sum_{I,j: i\in I} \lambda_{I,j} = 1$;
(b) for each time slot $t$, $\sum_{I,j: t \in [j+1, j+|I|]} \lambda_{I,j} = 1$;
(c) $z(x) = \sum_{I,j} \lambda_{I,j}$.
\end{proposition}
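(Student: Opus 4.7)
The plan is to construct $\lambda$ by a flow decomposition done separately for each color. Fix $c \in C$ with items $i_1 < i_2 < \cdots < i_p$, so that $n(i_s) = i_{s+1}$ for $s < p$. Consider a DAG with source $\sigma$, sink $\tau$, and internal nodes $(s, j)$ for $s \in \{1, \ldots, p\}$ and every valid time $j \geq \max\{i_s, k+1\}$. I will route $x_{i_s, j}$ units through each internal node $(s, j)$ using the following edge flows: on $\sigma \to (s, j)$, the ``fresh'' amount $\phi_{s,j} := x_{i_s, j} - x_{i_{s-1}, j-1}$, where $x_{i_{s-1}, j-1}$ is read as $0$ for $s = 1$ or when $(s-1, j-1)$ is out of range; on $(s, j) \to (s+1, j+1)$, whenever $s < p$ and $j+1 \geq i_{s+1}$, flow $x_{i_s, j}$; and on $(s, j) \to \tau$, whenever $s = p$ or $j+1 < i_{s+1}$, flow $x_{i_s, j}$. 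Constraint~\eqref{cons: order} gives $\phi_{s,j} \geq 0$, and flow conservation at each internal node is immediate by case analysis (with total throughput $x_{i_s, j}$).

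Standard flow decomposition then writes this flow as a sum of weighted $\sigma$-$\tau$ paths. I will identify each such path $\sigma \to (s_0, j_0) \to (s_0+1, j_0+1) \to \cdots \to (s_0+m-1, j_0+m-1) \to \tau$ of weight $\lambda$ with the monochromatic sequence matching $M_{I, j_0-1}$ for $I = (i_{s_0}, \ldots, i_{s_0+m-1})$, and set $\lambda_{I, j_0-1}$ to the accumulated weight of all such paths. The MSM conditions follow from the construction: (i) is by design; (ii), $(j_0-1)+s \geq i_{s_0+s-1}$, holds because each intermediate node is valid; (iii), $(j_0-1)+m < n(i_{s_0+m-1}) - 1$, holds because the existence of the $\tau$-edge at the terminal node forces either $s_0+m-1 = p$ or $j_0+m < i_{s_0+m}$.

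Finally, (a) holds because the total flow through column $s$ equals $\sum_j x_{i_s,j} = 1$ by~\eqref{cons: input}; (b) holds because, summing across colors, the total flow through row $j$ equals $\sum_i x_{i,j} = 1$ by~\eqref{cons: output}; and (c) follows from a short per-color telescoping computation giving
\[
\sum_{s, j} \phi_{s, j} \;=\; 1 + \sum_{s=1}^{p-1} \sum_{j \leq n(i_s)-2} x_{i_s, j},
\]
which is exactly the contribution of color $c$ to $z(x)$. The only subtlety in the write-up will be the bookkeeping of boundary cases (the valid time ranges for nodes and for the two ``out-going'' edges); the essential content is constraint~\eqref{cons: order}, which makes the ``continue whenever possible'' rule a feasible non-negative flow and, via telescoping, forces the total weight to equal the fractional cost.
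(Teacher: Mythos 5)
Your approach matches the paper's in essence: both decompose $x$ into weighted monochromatic sequence matchings, with constraint~\eqref{cons: order} supplying the key non-negativity that makes the ``continue while possible'' rule legal. The packaging differs — you build a static flow per color and invoke standard $\sigma$--$\tau$ path decomposition, whereas the paper greedily peels a maximal MSM from $x$ (taking the minimum starting time) and argues inductively that constraints~\eqref{cons: order} survive the subtraction. Your version makes (a) and (b) immediate as column/row throughput identities and replaces the induction with a one-shot flow-decomposition argument, which is a modest expository improvement rather than a genuinely different idea; your telescoping identity for (c) is also a correct and slightly more explicit version of the paper's one-sentence argument.

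One caution: you file the right-end boundary under ``bookkeeping,'' but it hides a real hypothesis. If $s<p$ and $x_{i_s,k+n}>0$, your forward edge $(s,k+n)\to(s+1,k+n+1)$ points to a nonexistent node and flow conservation at $(s,k+n)$ fails; the paper's greedy peeling hits the same wall, since the maximal sequence it extracts would then terminate at time $k+n$ and violate MSM condition (iii), which in turn breaks part (c). Such $x$ are feasible for the constraints as literally written (e.g.\ two items of one color, $k=1$, and the symmetric fractional point $x_{1,2}=x_{2,3}=a$, $x_{1,3}=x_{2,2}=1-a$ with $0<a<1$). To make either construction go through one needs $x_{i,k+n}=0$ for all $i\neq\last(i)$, which follows if constraint~\eqref{cons: order} is read as holding also for $j=k+n+1$ with $x_{\cdot,k+n+1}\equiv 0$ (a harmless strengthening of $\LP$ that is valid for every $0$-$1$ solution and does not change the optimum). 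Spell this out; it is a substantive assumption shared with the paper's own sketch, not mere range-checking.
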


\begin{proofsketch}
We can construct $\lambda$ by the following algorithm.
While there exist $i_1,j$ such that $x_{i_1,j+1} > 0$, find
such a pair with minimum $j$. Find a maximal sequence
$I = (i_1,i_2,\dots,i_m)$ of items of color $c(i_1)$ with
$x_{i_s,j+s} > 0$. By constraints~\eqref{cons: order}
(which are maintained through the induction), it must
be that $x_{i_s,j+s} \ge x_{i_1,j+1}$ and $j+m < n(i_m) - 1$.
Put $\lambda_{I,j} = x_{i_1,j+1}$ and subtract $\lambda_{I,j}$
from $x_{i_s,j+s}$ for all $s=1,2,\dots,m$. (Notice that this
will not cause constraints~\eqref{cons: order} to be violated.)
Constraints (a), (b) follow from constraints \eqref{cons: input},
\eqref{cons: output} of the $\LP$. Equation (c) follows from
the fact that every MSM $M_{I,j}$ that we construct ends at
time $j+|I|$ which precedes the arrival of the next item of
this color.
\end{proofsketch}

We next prove that the algorithm is well-defined.
\begin{claim}\label{cl: rule 2}
Executing case~\ref{it: rule 2} ends a phase.
\end{claim}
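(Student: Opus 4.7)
The plan is to show that the (at most three) color evictions prescribed by case~\ref{it: rule 2} perform a removal at time $t_q$, which ends the phase. Set $\tau := t^j_q - j$ and $L := |I^j_q|$; by definition $t^j_q = t_q + 1 - L$.

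First I would argue that the initial one or two evictions consume at least $\tau$ time steps. Each of the $\tau$ items of colors $c_1, c_2$ currently in the buffer must be removed at the cost of one time step, and any further arrivals of those colors during their eviction can only add to the step count. Hence after these evictions the clock has advanced from $j$ to some time $t' \ge j + \tau = t^j_q$; if $t' > t_q$ the phase has already reached $t_q$, so I may assume $t' \in [t^j_q, t_q]$, and $t_q - t' + 1 \le L$ further removals are needed.

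The heart of the matter is the third eviction, of $c^* := c(I^j_q)$. Writing $I^j_q = \{i_1 < i_2 < \cdots < i_L\}$, the key combinatorial observation is the index inequality
\[
i_s \;\le\; t_q - (L - s) \;=\; t^j_q + s - 1 \qquad (s = 1, 2, \ldots, L),
\]
which holds because the $L - s$ indices $i_{s+1}, \ldots, i_L$ are distinct, strictly greater than $i_s$, and bounded above by $t_q$. In the main case $c^* \notin \{c_1, c_2\}$, no $c^*$ items have been removed in the first two steps, so at the $s$-th removal of the third eviction (at time $t' + s - 1 \ge t^j_q + s - 1 \ge i_s$) items $i_1, \ldots, i_s$ have all arrived while only $s - 1$ have been removed, leaving at least one $c^*$ item in the buffer. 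Running $s$ up to $t_q - t' + 1$ then produces a removal at each of the times $t', t'+1, \ldots, t_q$.

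The main obstacle is the edge case in which $c^*$ coincides with one of $c_1, c_2$: there the early eviction of $c^*$ also sweeps up its arrivals during that step, burning extra time steps but also removing some members of $I^j_q$ prematurely. I would handle this by partitioning $I^j_q$ according to which of the three disjoint time windows carved out by the three evictions contains the arrival of each $i_s$, and re-applying the index inequality within each window; the additional time bought by the $c^*$ arrivals in the early step precisely compensates for the fewer $c^*$ items available to the final step, so a removal still lands at time $t_q$.
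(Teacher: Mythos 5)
Your proof matches the paper's: both evict the one or two blocks to push the clock past $t^j_q$, then argue that evicting the color of $I^j_q$ bridges the remaining gap to $t_q$, with your explicit index inequality $i_s \le t^j_q + s - 1$ making precise the availability claim the paper states without justification (``the items in $I'$ can be removed starting from time $t^j_q+b$''). Your sketched compensation argument for the case $c(I^j_q) \in \{c_1, c_2\}$ is exactly the paper's bookkeeping with the parameter $b$ (the number of $I^j_q$ items swept up by the first two evictions), which handles both cases uniformly rather than by your three-window partition, but the underlying idea is the same and sound.
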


\begin{proof}
Consider a phase $q$ where we execute case~\ref{it: rule 2}
at time $j$.
Let $B,B'\in {\cal B}^j$ denote the two blocks with
$|B| + |B'|\ge t^j_q - j$, and let $c,c'$ denote their
colors. (If the $t^j_q - j$ items stipulated by
case~\ref{it: rule 2} are of a single color, set
$B' = \emptyset$ and $c' = c$.)
Recall that $I^j_q$ is the set of items that determine
$t^j_q$. Notice that
$I^j_q\cap (B\cup B') = \emptyset$, because all
the items in $B\cup B'$ arrived before time $j$,
and all the items in $I^j_q$ arrive at time $j$ or
later. Let $b$ denote the number of items in
$I^j_q$ that are removed when we evict the colors
$c,c'$. Let $I'$ denote the set of remaining items
from $I^j_q$. Notice that $b > 0$ only if the color
of $I^j_q$ is either $c$ or $c'$. When we evict the
colors of $B,B'$, we reach $t^j_q+b-1$. As the items
in $I'$ can be removed starting from time $t^j_q+b$
and ending at time $t_q$, evicting the color of
$I'$ ends the phase.
\end{proof}

\begin{claim}\label{cl: rule3_a}
For every $\delta_1,\delta_2 > 0$ such that $\delta_2 > 2\delta_1$
there exists $\gamma =\gamma(\delta_1,\delta_2)$ such that
applying the process in case~\ref{it: rule 3} starting at time $j$
removes at least $t^j_q-j$ items that were in the buffer at time $j$.
\end{claim}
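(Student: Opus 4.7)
My plan is to establish the claim via an amortized accounting that tracks how the process consumes the available $d$-budget $\sum_i d_i^j$ and relates it to $N_E+|B^*|$, where $N_E$ is the total size of the blocks the process evicts and $B^*$ is the largest block remaining after the process (the candidate for the final catch-up step). My first move is to exploit that case~\ref{it: rule 0} has been applied exhaustively before case~\ref{it: rule 3} is entered: for every item $i$ in the buffer at time $j$ this gives $t(i)>j$, hence $w_i^j>1-\delta_1$, hence $d_i^j<\delta_1$ and $\hat d_B^j<\delta_1$ for every block $B$. Two consequences will be used repeatedly: $\Delta_j<\delta_1 k$, and any contiguous window of blocks whose $\hat d$-sum reaches $\delta_1$ must span at least two blocks.

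Next, I perform a per-step accounting. For each eviction step $p$, the maximality of $r_p$ combined with $r'_p\ge r_p$ guarantees that the $d$-budget charged during that step is at most $\delta_2|B^j_{s_p}|$, i.e.\ each eviction uses at most $\delta_2$ units of $d$-budget per item removed. For each skip step $p$, the failure $\sum_{u=r_p}^{s_p}\hat d_{B^j_u}^j<\delta_1$ combined with the $\arg\max$ choice of $s_{p+1}$ forces $\max_{u\in[r_p-1,s_p-1]}|B^j_u|=|B^j_{s_{p+1}}|$, which bounds the total $d$-volume inside the skip window by
\[
\sum_{u=r_p}^{s_p}|B^j_u|\,\hat d_{B^j_u}^j\;\le\;\bigl(\max_u|B^j_u|\bigr)\sum_u\hat d_{B^j_u}^j\;<\;\delta_1\,|B^j_{s_{p+1}}|.
\]
I would then chain these per-step bounds into a global accounting inequality. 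Runs of consecutive skip steps strictly shrink the candidate index range and the $\arg\max$ rule ensures the next pivot dominates every block in the current skip window, so the contributions telescope to $O(\delta_1)\cdot|B^*|$ plus a portion that can be absorbed into the next eviction's charge. Pairing this with the fact that items remaining but uncharged still contribute at most $\delta_1$ per item to $\Delta_j$, I expect to obtain an inequality of the form $\Delta_j\le D(\delta_1,\delta_2)\cdot(N_E+|B^*|)$. The case~\ref{it: rule 3} hypothesis $\Delta_j\ge(t_q^j-j)/\gamma$ then yields $N_E+|B^*|\ge(t_q^j-j)/(\gamma D)$, and choosing $\gamma=1/D(\delta_1,\delta_2)$ closes the claim.

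The main obstacle will be making the telescoping across alternating eviction and skip segments precise: a naive per-step bound loses a factor linear in the number of skip steps, so a careful potential argument is needed that tracks which items' $d$-budget has already been consumed by earlier evictions (via the $\tau$-history) versus which items still supply fresh budget. This is where the hypothesis $\delta_2>2\delta_1$ enters: it leaves enough slack in the per-item eviction charge ($\delta_2$) to absorb skip contributions (at rate $\delta_1$ per item) with strictly positive surplus, which is exactly what is needed to bound $D(\delta_1,\delta_2)$ away from $1$ and allow $\gamma>1$.
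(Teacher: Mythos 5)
Your proposal follows the same overall architecture as the paper's proof --- exhaust case~0 to get $d_i^j<\delta_1$, account per eviction/skip step, use the $\arg\max$ choice to control skip windows, and close with $\gamma=1/D$ --- so the plan is sound and correctly identifies the two technical obstacles. However, as written the proposal leaves both obstacles genuinely open.

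First, your skip-window bound has a small but real defect: the skip condition sums $\hat d$ over $u\in[r_p,s_p]$ (including $s_p$), whereas $s_{p+1}=\argmax\{|B_u^j|:u\in[r_p-1,s_p-1]\}$ governs only the indices \emph{strictly below} $s_p$, so $\max_{u\in[r_p,s_p]}|B_u^j|\le|B^j_{s_{p+1}}|$ does not follow immediately from the $\arg\max$ rule. Closing this requires exactly the observation you have not yet derived, namely the geometric decay $|B^j_{s_u}|<\frac{2\delta_1}{\delta_2}|B^j_{s_{u+1}}|$ for skip steps, which comes from combining the maximality of $r_u$ (which gives $\delta_2|B^j_{s_u}|<\sum_{g=r_u-1}^{s_u-1}\sum_{i\in B^j_g}d_i^j$) with the $\arg\max$ domination and the case-0 bound $\hat d^j_{B^j_{r_u-1}}<\delta_1$. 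This single inequality is what makes the skip contributions within a run summable as a geometric series with ratio $\frac{2\delta_1}{\delta_2}<1$, and it is precisely where $\delta_2>2\delta_1$ is consumed. Your "potential argument" paragraph gestures at this but does not produce it, and without it the accounting does indeed lose a factor in the number of skip steps, as you yourself observe.

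Second, your ``available $d$-budget'' is $\Delta_j^F=\sum_{B\in\mathcal{B}^j}\sum_{i\in B}d_i^j$, but the case-3 hypothesis is stated in terms of $\Delta_j$, the volume deficit against $x$. These are not the same quantity: $\Delta_j-\Delta_j^F$ is exactly the stale budget that was consumed by earlier charges, so one must still prove a lower bound of the form $\Delta_j^F\ge\frac{1-\delta_1-\delta_2}{1-\delta_1}\,\Delta_j$. The paper does this by a per-removed-block argument: if a past eviction of block $B$ charged an item still in the buffer at time $j$, then (since case~0 did not fire) the first item of $B$ satisfies $w^j_{f_B}\ge1-\delta_1$, hence by Proposition~\ref{pr: monotonicity of w} the whole block's $\Delta_j$-contribution is at least $(1-\delta_1)|B|$, while the stale charge it created is at most $\delta_2|B|$. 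Your sentence about ``items remaining but uncharged contributing at most $\delta_1$ per item to $\Delta_j$'' does not capture this step --- the contributors to $\Delta_j$ are the items the algorithm already removed, not the ones remaining --- so this part of the accounting needs to be reworked before the final inequality $\Delta_j\le D\cdot(N_E+|B^*|)$ can be established.
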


\begin{proof}
Let $\Delta_j^F = \sum_{B\in {\cal B}^j}\sum_{i\in B} d^j_i$ be
the uncharged portion of items that are removed by $x$, but the algorithm holds
at time $j$. We start by showing that $\Delta_j^F \geq
\frac{1-\delta_1-\delta_2}{1-\delta_1} \cdot \Delta_j$.
Each block $B$ that the algorithm removed from the buffer
before time $j$ contributes $\sum_{i\in B} w^j_i$ to $\Delta_j$
(and the sum of all those contributions is exactly $\Delta_j$).
Whenever the algorithm removes a block $B$,
it charges some of the volume of
the items that remain in its buffer and paid for removing $B$.
The total volume charged is at most $\delta_2\cdot |B|$.
(This is trivially true when $B$ is not removed during a
case~\ref{it: rule 3} process and does not charge anything.)
If all the items that are charged when $B$ is removed
are not in the algorithm's buffer at time $j$,
the same contribution of $\sum_{i\in B} w^j_i$ contributes
to $\Delta_j^F$ as well. Now consider the case that the
buffer does contain items that were charged
for the removal of $B$. Let $j'$ be the beginning
of the case~\ref{it: rule 3} process in which
block $B = B^{j'}_s$ was removed,
and let $B^{j'}_{p}$ be one of the blocks that
were charged for removing $B$ and its items
are still in the algorithm's buffer at time $j$.
By the ordering of the blocks in ${\cal B}^{j'}$, it must
be that $t(f^{j'}_{p}) \leq t(f^{j'}_{s})$. As we can't apply
case~\ref{it: rule 0}, $w^j_{f^{j'}_{p}}\geq 1-\delta_1$.
Therefore, the first item $f^{j'}_{s}$ of block $B$ also has
$w^j_{f^{j'}_{s}}\geq 1-\delta_1$.
As every item in block $B$ must have at least the same weight
as the first item,
%% NOA: maybe we should add a claim of this after defining the LP
we get that $\sum_{i\in B} w^j_i
\geq (1-\delta_1)\cdot|B|$. Block $B$ only charges at most a
volume of $\delta_2\cdot |B|$ of items that are still in
the algorithm's buffer at time $j$.
Putting $\sum_{i\in B} w^j_i = (1-\theta)\cdot|B|$, we get that $B$
contributes to $\Delta_j^F$ at least
$$\frac {(1-\theta)\cdot|B|-\delta_2\cdot |B|}{(1-\theta)\cdot|B|}
\geq \frac {1-\theta-\delta_2 - (\delta_1-\theta)}{1-\theta- (\delta_1-\theta)}
= \frac {1-\delta_1-\delta_2}{1-\delta_1}
$$
of the portion it contributes to $\Delta_j$.

Going back to the main argument,
Let $s_{e(1)} > s_{e(2)}>\dots>s_{e(\ell)}$ denote the indices of the blocks
that we removed during the case~\ref{it: rule 3} process
($\{s_{e(1)},\dots,s_{e(\ell)}\}\subseteq\{s_1, s_2,\dots,s_{p_j}\}$).
For each $p\in \{1,\dots,\ell\}$, by the definition of $r'_{e(p)}$,
$\sum_{u=r'_{e(p)}}^{s_{e(p)}-1} \sum_{i\in B^j_u} d^j_i\le
\delta_2\cdot |B^j_{s_{e(p)}}|$,
and $\sum_{u=r'_{e(p)}}^{s_{e(p)}} \hat{d}^j_{B^j_u} \ge \delta_1$.
Consider the indices $s_{e(p)+1},s_{e(p)+2},\ldots,s_{e(p+1)-1}$. (Those
are the indices defined in the process of blocks that weren't removed
between removing block $B^j_{s_{e(p)}}$ and block $B^j_{s_{e(p+1)}}$.)
We now show that for each $u \in [e(p)+1,e(p+1)-1]$,
 $|B_{s_{u}}^j|< \frac{2\delta_1}{\delta_2}\cdot|B^j_{s_{u+1}}|$
(the same holds for $u\in[1,e(1)-1]$ and $u\in[e(\ell)+1,p_j-1]$):
\begin{eqnarray*}
\delta_2\cdot |B_{s_u}^j| &<& \sum_{g=r_u-1}^{s_u-1} \sum_{i \in B^j_g} d^j_i =
\sum_{g=r_u-1}^{s_u-1} \hat{d}^j_{B^j_{g}} \cdot|B^j_{g}|
\le \sum_{g=r_u-1}^{s_u-1} \hat{d}^j_{B^j_g} \cdot|B^j_{s_{u+1}}| \\
&=&|B^j_{s_{u+1}}|\cdot \left(\hat{d}^j_{B^j_{r_u-1}} +
\sum_{g=r_u}^{s_u-1} \hat{d}^j_{B^j_g}\right)
<  2\delta_1 \cdot |B^j_{s_{u+1}}|.
\end{eqnarray*}
The first inequality follows from the definition of $r_u$.
The second inequality follows as $B^j_{s_{u}}$ is not removed by
the process, therefore, $B^j_{s_{u+1}}$ is defined to be the
maximal block in $\{B^j_{r_u-1},B^j_{r_u},\ldots, B^j_{s_u-1}\} $.
The last inequality follows as
$\sum_{u'=r_u}^{s_u-1} \hat{d}^j_{B^j_{u'}} < \delta_1$.
Furthermore, as the algorithm did not continue to execute case~\ref{it: rule 0}
at time $j$,
$d^j_{i} \leq 1-w^j_{i}<\delta_1$ for every $i \in B^j_{r_u-1}$,
and therefore the average over $i \in B^j_{r_u-1}$ of $d^j_i$ is also
less than $\delta_1$.
Thus we get $|B_{s_{u}}^j|<
\left(\frac{2\delta_1}{\delta_2}\right)^{e(p+1)-u} \cdot|B^j_{s_{e(p+1)}}|$.
%% NOA:we can do this equation inline and simplier
We can now bound the contribution to $\Delta_j^F$ of the blocks
with indices in $[s_{e(p+1)}+1,s_{e(p)+1}]$
(blocks that weren't removed and weren't
charged for removing any block) as follows:
\begin{eqnarray*}
\sum_{g=s_{e(p+1)}+1}^{s_{e(p)+1}} \sum _{i\in B_g^j} d_i^j &=&
\sum_ {u=e(p)+1}^{e(p+1)-1} \sum_{g=s_{u+1}+1}^{s_u} \sum _{i\in B_g^j} d_i^j
\le  \sum_ {u=e(p)+1}^{e(p+1)-1} (\delta_2+\delta_1)\cdot |B^j_{s_u}|\\
&\le &  (\delta_2+\delta_1)\cdot \sum_ {u=e(p)+1}^{e(p+1)-1}
\left(\frac{2\delta_1}{\delta_2}\right)^{e(p+1)-u}\cdot|B^j_{s_{e(p+1)}}|\\
&\le & (\delta_2+\delta_1) \cdot |B^j_{s_{e(p+1)}}| \cdot
\frac{2\delta_1}{\delta_2- 2\delta_1}
= \frac{2\delta_1(\delta_2+\delta_1)}{\delta_2- 2\delta_1}  \cdot |B^j_{s_{e(p+1)}}|,
\end{eqnarray*}
%(assuming that $\delta_2 > 2\delta_1$).
Adding the contributions of block $B^j_{s_{e(p+1)}}$
and the blocks that were charged for its removal, we get:
$$
\sum_{g=r'_{e(p+1)}}^{s_{e(p)+1}}\sum _{i\in B_g^j} d_i^j
\le \left(\delta_2+\delta_1+\frac{2\delta_1(\delta_2+\delta_1)}{\delta_2- 2\delta_1}\right)
\cdot |B^j_{s_{e(p+1)}}| =
\frac{\delta_2(\delta_2+\delta_1)}{\delta_2- 2\delta_1}
\cdot |B^j_{s_{e(p+1)}}|
$$
For the same reason,
$$\sum_{g=r'_{e(1)}}^{s_1=m(j)} \sum _{i\in B_g^j} d_i^j\le
\frac{\delta_2(\delta_2+\delta_1)}{\delta_2- 2\delta_1}
\cdot |B^j_{s_{e(1)}}|$$ and
$$ \sum_{g=r_{p_j}=1}^{s_{e(\ell)+1}} \sum _{i\in B_g^j} d_i^j
\le \frac{\delta_2(\delta_2+\delta_1)}{\delta_2- 2\delta_1}
 \cdot |B^j_{s_{p_j}}|.$$
Therefore, if $p_j > e(\ell)$ then
\begin{eqnarray}\label{eq: delta_j^F}
\Delta_j^F&=& \sum_{g= 1}^{m(j)} \sum_{i\in B^j_g} d_i^j
= \sum_{g=r_{p_j}=1}^{s_{e(\ell)+1}} \sum _{i\in B_g^j} d_i^j
+\sum_{p=1}^{\ell-1} \sum_{g=r'_{e(p+1)}}^{s_{e(p)+1}}\sum _{i\in B_g^j} d_i^j
+\sum_{g=r'_{e(1)}}^{s_1=m(j)} \sum _{i\in B_g^j} d_i^j \nonumber\\
&\le& \frac{\delta_2(\delta_2+\delta_1)}{\delta_2- 2\delta_1}
\cdot\left(|B^j_{s_{p_j}}| +\sum_{p=1}^{\ell}  |B^j_{s_{e(p)}}|\right).
\end{eqnarray}
%(If $p_j = e(\ell)$ we get $\Delta_j^F \le
%\frac{\delta_2(\delta_2+\delta_1)}{\delta_2- 2\delta_1}
%\cdot \sum_{p=1}^{\ell}  |B^j_{s_{e(p)}}|$.)
Thus,
\begin{eqnarray}\label{eq: number of items}
|B^j_{s_{p_j}}| +\sum_{p=1}^{\ell}  |B^j_{s_{e(p)}}|&\ge&
\frac{\delta_2- 2\delta_1}{\delta_2(\delta_2+\delta_1)}\cdot \Delta_j^F
\ge \frac{\delta_2- 2\delta_1}{\delta_2(\delta_2+\delta_1)}
\cdot \frac{1-\delta_1-\delta_2}{1-\delta_1} \cdot \Delta_j \nonumber\\
&\ge& \frac{\delta_2- 2\delta_1}{\delta_2(\delta_2+\delta_1)}
\cdot \frac{1-\delta_1-\delta_2}{1-\delta_1} \cdot\frac{1}{\gamma}\cdot (t_q^j-j).
\end{eqnarray}
Choosing $\gamma = \frac{\delta_2- 2\delta_1}{\delta_2(\delta_2+\delta_1)}
\cdot \frac{1-\delta_1-\delta_2}{1-\delta_1}$ we get that
$|B^j_{s_{p_j}}| +\sum_{p=1}^{\ell}  |B^j_{s_{e(p)}}|$ is at least $t_q^j-j$.

To conclude, by the definition of the process we removed from
the buffer at least $\sum_{p=1}^{\ell}  |B^j_{s_{e(p)}}|$
items from the items that were in the buffer at time $j$.
Notice that if $p_j =e(\ell)$ we get that, similar to Equation~\eqref{eq: delta_j^F},
$\Delta_j^F \le
\frac{\delta_2(\delta_2+\delta_1)}{\delta_2- 2\delta_1}
\cdot \sum_{p=1}^{\ell}  |B^j_{s_{e(p)}}|$,
and for the same reason as in Equation~\eqref{eq: number of items},
$\sum_{p=1}^{\ell} |B^j_{s_{e(p)}}| > t_q^j-j$.
Otherwise, it must be that the last block $B_{s_{p_j}}^j$ considered in the process
was not removed. Thus, removing the largest block $B\in {\cal B}^j$ that
remained will remove at least $|B^j_{s_{p_j}}|$ items, and overall
at least $|B^j_{s_{p_j}}| +\sum_{p=1}^{\ell}  |B^j_{s_{e(p)}}|\ge  t_q^j-j$
items.
\end{proof}

\begin{claim}\label{cl: rule3_b}
Executing case~\ref{it: rule 3} ends a phase.
\end{claim}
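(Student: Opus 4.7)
The plan is to derive Claim~\ref{cl: rule3_b} from Claim~\ref{cl: rule3_a} by an argument parallel to the proof of Claim~\ref{cl: rule 2}. First I would invoke Claim~\ref{cl: rule3_a}: after executing the case~\ref{it: rule 3} process (including the extra eviction of the largest remaining block of $\mathcal{B}^j$ in the sub-case when that is triggered), the algorithm has output at least $t^j_q - j$ items that were in its buffer at time $j$. Since each output advances the algorithm's position by exactly one slot, this alone moves us $t^j_q - j$ slots forward from the starting time $j$.

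Next, let $b$ denote the number of items of $I^j_q$ that are also output during the same process; such items can appear only when the color of $I^j_q$ coincides with the color of one of the evicted blocks. Items in $\mathcal{B}^j$ arrived strictly before time $j$, whereas items of $I^j_q$ arrive at times in $[j,t_q]$, so the two counts are essentially disjoint and together yield at least $(t^j_q - j) + b$ outputs during the process. The algorithm therefore sits at output slot $t^j_q + b$ or later when the process ends.

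The algorithm then performs one final step that evicts the color $c(I^j_q)$. The $|I^j_q| - b$ remaining items of $I^j_q$ have indices in $[j,t_q]$ and are currently in the buffer or will arrive in time, and by the defining property of $I^j_q$ together with $t^j_q = t_q + 1 - |I^j_q|$, they can be output in consecutive slots $t^j_q + b, t^j_q + b + 1, \ldots, t_q$. Hence this last eviction uses at least $|I^j_q| - b$ output slots, bringing the algorithm to slot $t_q$ or later and ending the phase, exactly as in the proof of Claim~\ref{cl: rule 2}. The only point I expect to require a brief check is the boundary item at input index $j$ (if any), which could technically lie in both $\mathcal{B}^j$ and $I^j_q$; I anticipate this being a harmless $\pm 1$ accounting detail handled by the convention already fixed earlier for when item $j$ enters the buffer.
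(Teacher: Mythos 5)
Your proposal is correct and essentially mirrors the paper's own proof: both invoke Claim~\ref{cl: rule3_a} to get at least $t^j_q - j$ removals of time-$j$ buffer items, add the number of $I^j_q$ items removed along the way (your $b$, the paper's $|I|$), conclude the current position is at least $t^j_q + b$, and finish by evicting the color of $I^j_q$ to cover the remaining $|I^j_q| - b = t_q - (t^j_q + b) + 1$ slots through $t_q$. The only stylistic difference is that the paper phrases the intermediate bound via the time $j'$ of the last step, writing $j' \ge t^j_q + |I|$, but the arithmetic and the logic are identical.
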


\begin{proof}
Consider phase $q$ where we execute case~\ref{it: rule 3}
at time $j$.
Let $j'$ be the time we execute the last step of
case~\ref{it: rule 3},
%By Claim~\ref{cl: rule3_a} and by the definition of $t^j_q$,
%if none of the items removed before the last step
%were from $I^j_q$,
%there are at least $|I^j_q|= t_q-t_q^{j} \ge t_q-j'$ items
%of color $c$ that can be removed from the buffer consecutively
%satiating at time $j'$.
%Otherwise,
and let $I$ be the set of items from $I^j_q$ that were removed
before time $j'$. By Claim~\ref{cl: rule3_a}, at least $t_q^{j}-j$
removed items were in the buffer at time $j$,
therefore at least $t_q^{j}-j+|I|$ items were removed overall.
Thus, $j' \ge t_q^j + |I|$. We can now evict the color of $I^j_q$
and reach $t_q$, as there are at least
$|I^j_q|-|I|= t_q - t_q^{j} +1 - |I| \ge t_q - j' + 1$ items
of this color that can be removed from the buffer consecutively
starting at time $j'$.
\end{proof}

Let $\phi = \frac{1 + \sqrt{5}}{2}\approx 1.618$ denote
the golden ratio.
\begin{claim}\label{cl: rule 4}
Assuming that $\gamma > \frac{1+\phi}{1 - \delta_3}$,
if we've reached case~\ref{it: rule 4}, then in the repeated
execution of the procedure we execute one of the
cases~\ref{it: rule 1}--\ref{it: rule 3}.
\end{claim}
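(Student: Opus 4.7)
The plan is to argue by contradiction. Suppose case~\ref{it: rule 4} is executed at time $j$ and then, in the re-executed procedure (after the first case-4 eviction and any subsequent case~\ref{it: rule 0} iterations), case~\ref{it: rule 4} is executed again at some time $j''$ within the same phase $q$. Let $B$ be the block evicted by the first case-4 step, with $|B|=s$ and color $c=c(B)$; then $j''=j+s+b'$, where $b'$ counts items of color $c$ that arrive during the eviction of $B$.

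First I would collect the constraints implied by case~\ref{it: rule 4} at $j$ and at $j''$: at both times we have $\Delta<L/\gamma$ (case~\ref{it: rule 3} failed), $|B_1|+|B_2|<L$ (case~\ref{it: rule 2} failed), and every buffered item $i$ satisfies $w_i>1-\delta_1$ (case~\ref{it: rule 0} did not apply). Since phase $q$ has not ended by $j''$, the $y$-cost of the fractional solution accumulated in $(j,j'']$ is strictly less than $\delta_3$, directly from the definition of $t_q$.

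Next I would lower-bound $\Delta_{j''}$ by exploiting the residual fractional weight of $B$. Each $i\in B$ has $\bar w^{j''}_i=0$ and so contributes $w^{j''}_i$ to $\Delta_{j''}$. Writing $R=\sum_{i\in B}(w^j_i-w^{j''}_i)$ and using $w^j_i>1-\delta_1$ on $B$, I would derive $\Delta_{j''}\ge s(1-\delta_1)-R-\Delta_j$, where the $-\Delta_j$ term absorbs any decrease of $w$ on items already evicted by the algorithm before $j$. The critical step is to bound $R$ via Proposition~\ref{pr: packing sequences}: every fractional removal of a color-$c$ item from $B$ in $(j,j'']$ belongs to an MSM of color $c$ that either ends in the interval (the total $\lambda$-mass of such endings is bounded by the $y$-cost, hence by $\delta_3$) or extends past $j''$ by consuming one of the $b'$ color-$c$ arrivals in the interval. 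Summing should yield a bound of the form $R\le \delta_3\cdot s+b'$, so
\[
\Delta_{j''}\;\ge\;(1-\delta_1-\delta_3)\,s-b'-\Delta_j.
\]

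Since $j''=j+s+b'$ and $|I^{j''}_q|\ge|I^j_q|-(s+b')$, we have $L_{j''}\le L_j$. Combined with the two case-3 failure bounds $\Delta_j,\Delta_{j''}<L_j/\gamma$, this gives $(1-\delta_1-\delta_3)\,s-b'<2L_j/\gamma$. I would then combine this with the case-2 failure $s+s_2<L_j$ at time $j$ and the analogous bound at $j''$, tracking how the top blocks at $j''$ arise from the non-$B$ blocks at $j$ together with the $s$ new arrivals kept by the algorithm. The identity $\phi^2=1+\phi$ enters as the Fibonacci-like threshold balancing these two layered constraints: the combined inequality collapses to $\gamma\le(1+\phi)/(1-\delta_3)$, contradicting the hypothesis. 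The main obstacle is the clean accounting in the third step (bounding $R$ through the MSM packing and the $y$-cost), together with the combinatorial analysis in the fourth step verifying that the golden ratio is precisely the threshold that emerges from the interaction of the two case-2 failures and the $\Delta$ growth across $(j,j'']$.
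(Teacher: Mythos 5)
Your proposal diverges substantially from the paper's argument, and several of the steps you list as ``the plan'' are not just hard to fill in --- they contain concrete gaps.

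\textbf{The bound $R\le\delta_3 s+b'$ is not justified.} You want to say that every MSM that removes $B$-items in $(j,j'']$ either ends inside that interval (mass $\le\delta_3$ by the $t_q$ accounting, and each can cover $\le s$ of $B$) or extends past $j''$ by ``consuming one of the $b'$ new arrivals.'' But the second clause does not bound a mass times a multiplicity by $b'$: an MSM of color $c$ with total $\lambda$-mass close to $1$ can be matched so that a large prefix of $B$ sits in $(j,j'']$ and the tail crosses $j''$, and its contribution to $R$ is (mass)$\times$(number of $B$-items matched into $(j,j'']$), which is governed by the block size $s$, not by $b'$. The only bound you actually get ``for free'' is mass $\le 1$ for the crossing MSMs of color $c$, which gives $R\lesssim \delta_3 s + s$, far too weak. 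You would have to argue that crossing MSMs cannot put many $B$-items before $j''$, and that is precisely the kind of structural fact that needs Proposition~\ref{pr: monotonicity of w} together with a careful positional argument --- nothing in your outline delivers it.

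\textbf{The final combinatorial step is a placeholder, and the approach is missing the lower bound it needs.} Even granting $R\le\delta_3 s+b'$, you arrive at $(1-\delta_1-\delta_3)s-b'<L_j/\gamma$ (using $\Delta_{j''}<L_{j''}/\gamma\le L_j/\gamma$). To turn this into a contradiction with $\gamma>(1+\phi)/(1-\delta_3)$ you need a \emph{lower} bound on $s$ in terms of $L_j$. But the failures of cases~\ref{it: rule 2} and~\ref{it: rule 3} only give \emph{upper} bounds on block sizes and on $\Delta_j$; they say nothing that forces $s$ to be a constant fraction of $L_j=t^j_q-j$. The sentence about ``the identity $\phi^2=1+\phi$ entering as the Fibonacci-like threshold balancing these two layered constraints'' is not an argument; it asserts the conclusion without identifying the mechanism. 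In the paper, the missing lower bound on $s$ is exactly what the MSM-packing argument supplies: because the mass of MSMs spanning $[j,t_q]$ is $\ge 1-\delta_3$ and $\Delta_j<\frac1\gamma L_j$, Markov's inequality shows that a $\ge\frac{1}{1+\phi}$ fraction of that mass has $t_{I,j'}-j<\frac{\phi}{\gamma(1-\delta_3)}L_j$, which forces the block $B$ containing $I_{\min}$ to have size $>\bigl(1-\frac{1}{\gamma(1-\delta_3)}\bigr)L_j$. The threshold $\phi$ is the unique value making the Markov tail bound and the block-size separation coincide (set $c/(c-1)=c+1$), which is where the golden ratio actually comes from.

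\textbf{What the paper tracks versus what you track.} Your lower bound on $\Delta_{j''}$ comes from the residual fractional weight of the $B$-items themselves. The paper's argument does not rely on that quantity at all; it tracks the weight of the last $t^j_q-j-|B|$ items the algorithm removed (which may be new arrivals of $B$'s color), and shows each has $w^{t'+1}_i\ge\frac{1}{1+\phi}(1-\delta_3)>\frac1\gamma$ because the surviving long MSMs in $\mathcal{L}'$ have not yet matched them. It then preserves the inequality $\Delta_{t'+1}\ge\frac1\gamma(t^{t'+1}_q-t'-1)$ across any case~\ref{it: rule 0} evictions, so that case~\ref{it: rule 3} must fire at the next decision point. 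Your outline, by contrast, jumps directly from $j$ to $j''$, treats the intermediate case~\ref{it: rule 0} steps only in passing (``$j''=j+s+b'$'' silently ignores them), and does not maintain the invariant along the way. As written, the proof does not go through; the two steps you yourself flag as ``the main obstacle'' are the substance of the claim, and the paper resolves them with an entirely different mechanism (long MSMs + Markov), not with an $R$-type accounting.
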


\begin{proof}
Suppose that case~\ref{it: rule 4} is executed at time $j$
in phase $q$.
We may assume that $t_q > j$, otherwise the claim is
vacuous. Let
$$
{\cal L} = \{(I,j'): j' < j\wedge j'+|I|\ge t_q
\wedge \lambda_{I,j'} > 0\}
$$
denote the set of monochromatic sequences
in the packing $\lambda$ that are matched
to an interval containing the entire interval
$[j,t_q]$. (See Figure~\ref{fig.defs} in the
appendix.)
Notice that by the definition of $t_q$ and the fact
that $t_{q-1} < j < t_q$,
\begin{equation}\label{eq: long seq weight}
\Lambda = \sum_{(I,j')\in{\cal L}} \lambda_{I,j'}\ge 1 - \delta_3.
\end{equation}
By the definition of $t^j_q$, for every $(I,j')\in {\cal L}$,
none of the items in $M^{-1}_{I,j'}([j,t^j_q-1])$ (the items
matched by $M_{I,j'}$ to the interval $[j,t^j_q-1]$) arrive
at time $j$ or later. As we've reached case~\ref{it: rule 4},
we may conclude that $\max\{|B|:\ B\in {\cal B}^j\} < t^j_q-j$
(otherwise case~\ref{it: rule 2} would apply)
and $\Delta_j < \frac 1 \gamma\cdot (t^j_q-j)$
(otherwise case~\ref{it: rule 3} would apply).
In particular, consider $(I,j')\in {\cal L}$.
Let $t_{I,j'}$ denote the minimum time $t'$
for which $M^{-1}_{I,j'}(t')$ is in the algorithm's
buffer. If no such time exists, set $t_{I,j'} = t^j_q$.
The items in $M^{-1}_{I,j'}([j,t_{I,j'}-1])$ are no longer
in the algorithm's buffer at time $j$. Therefore,
$\sum_{(I,j')\in {\cal L}} \lambda_{I,j'} (t_{I,j'} - j)
\le\Delta_j$. Using Equation~\eqref{eq: long seq weight},
we conclude that
$E[t_{I,j'} - j] < \frac{1}{\gamma (1 - \delta_3)} (t^j_q - j)$,
where the expectation is taken over $(I,j')\in {\cal L}$
with probability distribution
$\Pr[(I,j')] = \frac{\lambda_{I,j'}}{\Lambda}$.
(See Figure~\ref{fig.defs} in the appendix.)
In particular,
\begin{equation}\label{eq: longest}
\min\{t_{I,j'} - j:\ (I,j')\in {\cal L}\} < \frac{1}{\gamma (1 - \delta_3)} (t^j_q - j),
\end{equation}
and by Markov's inequality
\begin{equation}\label{eq: weight of long}
\Pr\left[t_{I,j'} - j < \frac{\phi}{\gamma (1 - \delta_3)} (t^j_q - j)\right] \ge
1 - \frac{1}{\phi} = \frac{\phi-1}{\phi} = \frac{1}{1+\phi}.
\end{equation}
Let
$$
(I_{\min},j'_{\min}) = \argmin\{t_{I,j'} - j:\ (I,j')\in {\cal L}\}.
$$
Let $B\in {\cal B}^j$ denote the block containing items from
$I_{\min}$ in the algorithm's buffer.
Notice that by Equation~\eqref{eq: longest},
$$
|B| > \left(1 - \frac{1}{\gamma (1 - \delta_3)}\right)\cdot (t^j_q-j).
$$
As we've reached case~\ref{it: rule 4} (and therefore case~\ref{it: rule 2}
does not apply), for all other blocks $B'\in {\cal B}^j$,
$$
|B'| < t^j_q - j - |B|\le \frac{1}{\gamma (1 - \delta_3)} (t^j_q - j)
< \left(1 - \frac{1}{\gamma (1 - \delta_3)}\right)\cdot (t^j_q - j),
$$
where the last inequality uses
$\gamma > \frac{1+\phi}{1 - \delta_3} > \frac{2}{1 - \delta_3}$.
Thus, the case~\ref{it: rule 4} step at $j$ must evict the color of
$B$. As case~\ref{it: rule 1} did not apply, evicting the color
of $B$ does not reach $t_q$. Denote
$$
{\cal L}' = \left\{(I,j')\in {\cal L}:\
   t_{I,j'} - j < \frac{\phi}{\gamma (1 - \delta_3)} (t^j_q - j)\right\}.
$$
For every $(I,j')\in {\cal L}'$ it must hold that the portion of
$I$ in the algorithm's buffer is $B$. This is because for all
other colors
$t_{I,j'} - j > \left(1 - \frac{1}{\gamma (1 - \delta_3)}\right) (t^j_q - j)$,
and as $\gamma > \frac{1+\phi}{1 - \delta_3}$, we get
$t_{I,j'} - j > \frac{\phi}{\gamma (1 - \delta_3)} (t^j_q - j)$.
So consider the situation after the step at $j$, where
we remove $B$ and possibly additional items of $B$'s color,
and we reach $t' < t_q$. Consider the set $A$ of the last
$t^j_q - j - |B|$ items that the algorithm removed so far.
For all $i\in A$ and for all $(I,j')\in {\cal L}'$, $M_{I,j'}(i) > t'$.
Therefore, by Equation~\eqref{eq: weight of long},
$w_i^{t'+1}\ge \frac{1}{1+\phi} (1 - \delta_3) > \frac{1}{\gamma}$.
On the other hand, $\bar{w}_i^{t'+1} = 0$.
Let $t''$ be the point matched by $M_{I_{\min},j'_{\min}}$ to
the first item $i'$ of $I_{\min}$ that wasn't yet encountered.
Notice that for all $(I,j')\in {\cal L}'$, $M_{I,j'}(i')\ge t''$.
Clearly, $t''\ge t^{t'+1}_q$. Therefore,
$$
\Delta_{t'+1} \ge\sum_{i\in A} w_i^{t'+1} >
\frac{1}{\gamma}\cdot (t'' - t' - 1)\ge
\frac{1}{\gamma}\cdot (t^{t'+1}_q - t' - 1).
$$
When we execute the procedure again, we first execute
case~\ref{it: rule 0}. Assuming that we haven't reached $t_q$,
the following holds. Each removed item moves our current
position $t'$ by $1$, and we may lose $\frac{1}{\gamma}$
in our estimate of $\Delta_{t'+1}$ for each increment of $t'$.
Each removed item with the color of $B$ moves the target $t''$
by $1$, but in those steps we do not lose $\frac{1}{\gamma}$
in our estimate of $\Delta_{t'+1}$. (Notice that if we do not
reach $t_q$, then we haven't yet encountered this item in any
of the sequences in ${\cal L}'$ and the reason for its removal
must be other sequences.) Thus, with respect to the new $t'$,
we still have that
$\Delta_{t'+1}\ge \frac{1}{\gamma}\cdot (t^{t'+1}_q - t' - 1)$,
so if cases~\ref{it: rule 1} and~\ref{it: rule 2} do not apply,
then case~\ref{it: rule 3} applies.
\end{proof}

We now analyze the charging scheme that is used in
case~\ref{it: rule 3}. We say that a block $B^j_u\in {\cal B}^j$
that is charged at time $j$ pays $\hat{d}^j_{B^j_u}$ (towards
evicting the color of $B^j_{s_p}$ for which $r'_p\le u < s_p$).
Denote by $\widehat{\cal B}^j$ the set of blocks that are charged
at time $j$.
\begin{lemma}\label{lm: charge}
$\sum_j \sum_{B\in \widehat{\cal B}^j} \hat{d}^j_B\le 2\cdot z(x)$.
\end{lemma}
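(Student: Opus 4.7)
The plan is to interchange the order of summation and reduce the claim to a per-item estimate. Expanding the definition $\hat{d}^j_B = \frac{1}{|B|}\sum_{i\in B} d^j_i$ and letting $j_1 < j_2 < \cdots < j_{t(i)}$ denote the times at which item $i$ is charged (in a block of size $m_s$ at time $j_s$),
\[
\sum_j \sum_{B\in \widehat{\cal B}^j} \hat{d}^j_B
= \sum_i \sum_{s=1}^{t(i)} \frac{d^{j_s}_i}{m_s}.
\]
By the maintenance of $\tau$, we have $d^{j_s}_i = w^{j_{s-1}}_i - w^{j_s}_i$ with the convention $j_0 = i-1$, so the numerators telescope along each item's charge sequence.

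I aim to establish the per-item bound
\[
C_i \;:=\; \sum_{s} \frac{d^{j_s}_i}{m_s}
\;\le\; \bigl(1 - w^{n(i)-2}_i\bigr) + \bigl(1 - w^{n(n(i))-2}_{n(i)}\bigr),
\]
with the convention that any term past the end of the input is dropped. Granted this bound, summing over $i$ and observing that each quantity $1 - w^{n(i')-2}_{i'}$ is counted at most twice on the right hand side---once as $i$'s own LP cost (when $i = i'$) and once as the successor's LP cost (when $n(i) = i'$)---gives $\sum_i C_i \le 2 z(x)$ as desired.

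To prove the per-item bound I split $i$'s charges at the arrival of $n(i)$. For charges at times $j_s \le n(i) - 1$ the block containing $i$ may have size only $1$, but telescoping caps the total $d$-mass by $1 - w^{n(i)-1}_i = (1 - w^{n(i)-2}_i) + x_{i, n(i)-1}$; the leftover $x_{i, n(i)-1}$ is absorbed via constraint~\eqref{cons: order} into $x_{n(i), n(i)}$, a piece of $n(i)$'s LP cost. For charges at times $j_s \ge n(i)$, the algorithm has not yet evicted color $c(i)$ (else $i$ would not still be in the buffer), so $n(i)$ is present in $i$'s block, giving $m_s \ge 2$. The resulting factor $1/m_s \le 1/2$, combined with a further use of constraint~\eqref{cons: order} in the form $x_{i, j'-1} \le x_{n(i), j'}$, converts these contributions into LP removals of $n(i)$ inside its own LP-cost window.

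The principal technical hurdle is the degenerate case in which $n(n(i)) = n(i) + 1$, so that $n(i)$'s LP-cost window is empty and the above use of constraint~\eqref{cons: order} overshoots. Here one must iterate the order constraint through $n^{(p)}(i)$ for $p \ge 2$ and exploit that for charge times $j_s \ge n^{(p)}(i)$ the block contains $i, n(i), \ldots, n^{(p)}(i)$ and so $m_s \ge p + 1$. The shrinking factor $1/m_s$ compensates exactly for the $p$-step shift incurred by the iterated order constraint, so the per-item bound still aggregates to $2 z(x)$ once each LP-cost term is accounted for at most twice across the sum over $i$.
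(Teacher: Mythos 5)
Your plan takes a genuinely different route from the paper --- you work item-by-item using telescoping and the order constraints, whereas the paper works MSM-by-MSM using the packing $\lambda$ of Proposition~\ref{pr: packing sequences} --- but as sketched it has real gaps, and I do not believe the argument closes without an ingredient you are not using.

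The central problem is the stated per-item bound. You yourself notice that it can fail when $n(n(i)) = n(i)+1$, but the failure is more basic: when $n(i) = i+1$ and $n(n(i)) = i+2$, \emph{both} terms on the right-hand side are zero, yet item $i$ can perfectly well be charged with $d^{j_s}_i > 0$ (its weight $w^j_i$ drops even though its LP-cost window is empty, since $x_{i,j}$ for $j \geq n(i)-1$ does not cost anything). So the stated inequality is false, not merely tight. You then propose iterating the order constraint, but the accounting never lines up: iterating~\eqref{cons: order} $p$ times gives $x_{n^{(p)}(i), j+p} \geq x_{i,j}$ only for $j \geq n^{(p)}(i)-p-1$, which maps the removals of $i$ in the window $[n^{(p)}(i)-p-1,\, n^{(p+1)}(i)-p-2]$ to removals of $n^{(p)}(i)$ in its LP-cost window, whereas the block containing $i$ has size $\geq p+1$ only once $n^{(p)}(i)$ has \emph{arrived}, i.e.\ for charge times $j_s \geq n^{(p)}(i)$. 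These two windows are offset by roughly $p+1$, and the $1/m_s$ factor is a multiplicative discount, not an additive shift --- it does nothing to absorb the $x_{i,j}$-mass that lands in the gap. You assert that the shrinking factor ``compensates exactly,'' but nothing in the proposal demonstrates this, and I do not see how it could.

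There is also a structural reason to be suspicious: your argument is pure bookkeeping on the $\tau$-mechanism and block sizes, and never invokes the fact that when the charging happens (inside case~\ref{it: rule 3}) case~\ref{it: rule 1} did \emph{not} apply. The paper's proof uses exactly this to argue that every contributing MSM $(I,j')$ with $j'+|I|>j$ must satisfy $j'+|I|<t_q$, which is what caps each MSM's contribution at two. Without some surrogate for this algorithmic fact, a per-item accounting is trying to prove a statement that the charging-scheme mechanics alone do not imply. Your expansion $\sum_j\sum_B \hat d^j_B = \sum_i\sum_s d^{j_s}_i/m_s$ and the telescoping of $d$-masses are correct and could be a useful starting point, but to make the approach work you would need to bring in the phase structure (or, more cleanly, switch to the MSM decomposition and track how many times each $(I,j')$ can be hit, as the paper does).

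Two small additional notes: you overload the symbol $t(i)$, which the paper already uses for $\min\{t: w^t_i\le 1-\delta_1\}$; and the clean two-fold double-counting argument you give for the final summation only survives if the per-item bound genuinely has only two terms on the right --- once you iterate over $n^{(p)}(i)$ for all $p$, you will have to re-verify that each LP-cost term is charged with total weight at most $2$, which is an additional constraint your sketch does not check.
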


\begin{proof}
Fix $j$ and consider a block $B\in\widehat{\cal B}^j$ which pays
$\hat{d}^j_{B}$ at time $j$. Notice that
$$
\hat{d}^j_{B} = \frac{1}{|B|}\cdot\sum_{i\in B} d^j_i\le\max\{d^j_i:\ i\in B\}.
$$
Let $i_B = \argmax\{d^j_i:\ i\in B\}$. Then, $d^j_{i_B}$ is simply the
sum of $\lambda_{I,j'}$ over $(I,j')$ such that
$M_{I,j'}(i_B)\in (\tau_{i_B,j},j]$. So we can think of such $(I,j')$
as contributing $\lambda_{I,j'}$ towards the payment of $\hat{d}^j_{B}$.
If $j' + |I|\le j$, then $(I,j')$ will never be ``asked" to contribute again.
However, if $j' + |I| > j$, then $I$ contains items that are not charged
at time $j$, and such items may appear in a future block $B'$ that
pays for removing some block in a future phase. We argue that
it must be the case that $j' + |I| < t_q$. To prove that, assume for
contradiction that this is not the case. Notice that at time $j$ the item
$i_B$ is still in the algorithm's buffer, and $M_{I,j'}(i_B)\le j$. So
if the algorithm evicts $c(i_B)$, it must reach $t_q$. This contradicts
the assumption that at time $j$ the algorithm executes case~\ref{it: rule 3},
as case~\ref{it: rule 1} applies. Therefore, if $(I,j')$ contributes again
in a future phase at some time $j''$, we have that $j' + |I| < t_q < j''$,
so $(I,j')$ will never contribute a third time.
\end{proof}

We are now ready to prove our main result.
\begin{proofof}{Theorem~\ref{thm: main}}
We choose $\delta_1,\delta_2,\delta_3\in (0,1)$
such that $\delta_2 > 2\delta_1$ and
$\gamma = \frac{1-\delta_1-\delta_2}{1-\delta_1}\cdot
\frac{\delta_2- 2\delta_1}{\delta_2(\delta_2+\delta_1)}
> \frac{1 + \phi}{1-\delta_3}$.
(For example, we can choose $\delta_1 = \frac{1}{40}$,
$\delta_2 = \frac{1}{10}$, and $\delta_3 = \frac{1}{5}$.)

The number of phases is at most $\lceil z(x) / \delta_3\rceil$.
In a phase, cases~\ref{it: rule 1}--\ref{it: rule 4} are executed
at most once. The total number of steps due to case~\ref{it: rule 1},
case~\ref{it: rule 2}, case~\ref{it: rule 4}, and the last two steps of
case~\ref{it: rule 3} is at most 4. (The worst case is when the
algorithm executes case~\ref{it: rule 4} and then case~\ref{it: rule 2}.)
Therefore, the total cost of those steps is at most
$\frac{4}{\delta_3}\cdot z(x) + 4$.

Now consider case~\ref{it: rule 0}. When a color $c$ is evicted
at time $j$, this is because there's a block $B$ in the algorithm's
buffer with $i\in B$ such that $w^j_i\le 1 - \delta_1$. In particular,
$w^j_{f^j_B}\le 1 - \delta_1$. We remove $f^j_B$ at time $j$, so
for every monochromatic sequence $(I,j')$ with $M_{I,j'}(f^j_B)\le j$,
when this step is over we've reached $j'+ |I|$. Those are the sequences
that pay for the drop of at least $\delta_1$ by time $j$ in the weight
of $f^j_B$. As we've reached past them, we will never count them
again for another step of case~\ref{it: rule 0}. Thus, the total number
of such steps is at most $\frac{1}{\delta_1}\cdot z(x)$.

The remaining steps are color evictions due to the charging
scheme of case~\ref{it: rule 3}. Notice that whenever a block
$B_{s_p}\in {\cal B}^j$ is removed by this case, we have that
$\sum_{u=r'_p}^{s_p} \hat{d}^j_{B^j_u}\ge \delta_1$.
Therefore, the number of such steps is at most
$\frac{1}{\delta_1}\cdot \sum_j \sum_{p\in I^j_e}
\sum_{u=r'_p}^{s_p} \hat{d}^j_{B^j_u}$,
where $I^j_e$ denotes the set of indices of blocks whose removal
created a charge at time $j$. By Lemma~\ref{lm: charge},
$\frac{1}{\delta_1}\cdot
\sum_j \sum_{p\in I^j_e}\sum_{u=r'_p}^{s_p} \hat{d}^j_{B^j_u}\le
\frac{2}{\delta_1}\cdot z(x)$.
The total cost of all the cases is
$z(\bar{x})\le \left(\frac{3}{\delta_1} + \frac{4}{\delta_3}\right)\cdot z(x) + 4$.
As $z(x)\ge |C|\ge 1$, we get that the approximation guarantee $\alpha$
satisfies
$\alpha\le \frac{3}{\delta_1} + \frac{4}{\delta_3} + 4$.
\end{proofof}

%%%%%%%%%%%%%%%%%%%%%%%%%%%%%%%%%%%%%%%%%%
%%%%%%%%%    Concluding Remarks
%%%%%%%%%%%%%%%%%%%%%%%%%%%%%%%%%%%%%%%%%%
\section{Concluding Remarks}

Our methods can be adapted easily to handle some
additional constraints, such as incurring a color
change cost whenever we've accummulated too
many time steps without a color change (a constraint
relevant to~\cite{ROADEF2005}).
Numerical estimates of the best constant that the above
analysis gives indicate that it is below $135$, taking
$\delta_1\approx 0.02763$, $\delta_2\approx 0.11416$,
and $\delta_3\approx 0.18481$. We did not
attempt to optimize our analysis, however, it is unlikely
that our methods can be pushed to yield a very small constant
(such as $2$). Substantially improving the approximation
guarantee for RBM is an interesting open problem. Also,
adapting our methods to deal with more general cost measures
appears to be a non-trivial task. In a variant of RBM called the
(uniform) $k$-client problem~\cite{ATUW01}, instead of a buffer
there are $k$ input sequences. At each time step, the next item
from one of the sequences is chosen and moved to the output
sequence. (So the choice of which item to remove affects the order
of the combined input sequence.) The goal is the same as RBM: to
minimize the number of color changes in the output sequence.
Adapting our methods to deal with this setting seems to be another
fascinating problem.

\newpage

\bigskip\bigskip\bigskip

\appendix

\section*{Appendix: Figures}

\begin{figure}[ht]
\center
\includegraphics[scale=.5]{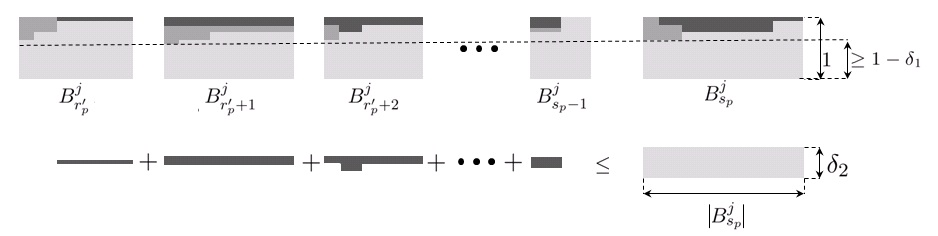}
\caption{The black area indicates the $d_i^j$-s, the dark grey area indicates
the remaining portion of the $x_{i,j}$-s that was previously charged, and
the light grey area indicates the $w_i^j$-s.}
\label{fig.rule3}
\end{figure}

\bigskip\bigskip\bigskip

\begin{figure}[ht]
\center
\includegraphics[scale=.5]{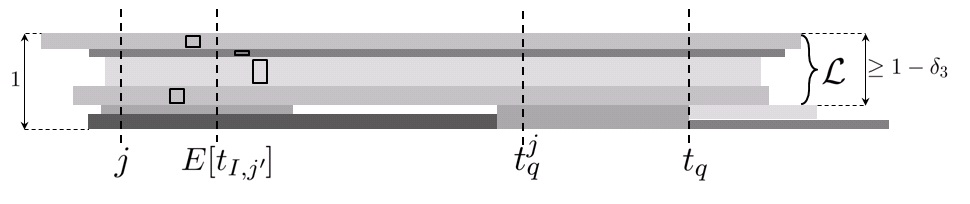}
\caption{The strips of varying tones indicate the packing of MSMs
in the fractional solution, and the outlined rectangles indicate the
$t_{I,j'}$-s.}
\label{fig.defs}
\end{figure}

\end{document}